\newenvironment{proof}{\makebox[7ex][l]{\it Proof:\/}}{\hfill\/ \hfill\/ {\it Q.E.D.} \vspace{0.5ex}\\}
\newtheorem{assumption}{Assumption}
\newtheorem{remark}{Remark}
\newtheorem{theorem}{Theorem}
\newtheorem{lemma}{Lemma}
\newcommand{\bgeq}{\begin{equation}}
\newcommand{\edeq}{\end{equation}}
\newcommand{\bgdm}{\begin{displaymath}}
\newcommand{\eddm}{\end{displaymath}}
\newcommand{\mtx}[1]{\begin{bmatrix} #1 \end{bmatrix}}
\newcommand{\mtxr}[1]{\begin{bmatrix*}[r] #1 \end{bmatrix*}}
\begin{document}

\title{\LARGE \bf
Cooperative Adaptive Learning Control for A Group of Nonholonomic UGVs by Output Feedback
}

\author{Xiaonan Dong$^1$, Paolo Stegagno$^2$,  Chengzhi Yuan$^1$\thanks{Corresponding author.}, Wei Zeng$^3$
\vspace*{0.1in} \\
$^1$Department of Mechanical, Industrial and Systems Engineering \\
University of Rhode Island, Kingston, RI 02881, USA \\
Email: {\tt dong\_xn@uri.edu, cyuan@uri.edu}
\vspace*{0.1in}\\
$^2$Department of Electrical, Computer, and Biomedical Engineering \\
University of Rhode Island, Kingston, RI 02881, USA \\
Email: {\tt pstegagno@uri.edu}
\vspace*{0.1in}\\
$^3$School of Mechanical and Electrical Engineering \\
Longyan University, Longyan 364012, China \\
Email: {\tt zw0597@126.com}
}

\date{}

\maketitle

\begin{abstract}
A high-gain observer-based cooperative deterministic learning (CDL) control algorithm is proposed in this chapter for a group of identical unicycle-type unmanned ground vehicles (UGVs) to track over desired reference trajectories.
For the vehicle states, the positions of the vehicles can be measured, while the velocities are estimated using the high-gain observer.
For the trajectory tracking controller, the radial basis function (RBF) neural network (NN) is used to online estimate the unknown dynamics of the vehicle, and the NN weight convergence and estimation accuracy is guaranteed by CDL.
The major challenge and novelty of this chapter is to track the reference trajectory using this observer-based CDL algorithm without the full knowledge of the vehicle state and vehicle model.
In addition, any vehicle in the system is able to learn the knowledge of unmodeled dynamics along the union of trajectories experienced by all vehicle agents, such that the learned knowledge can be re-used to follow any reference trajectory defined in the learning phase.
The learning-based tracking convergence and consensus learning results, as well as using learned knowledge for tracking experienced trajectories, are shown using the Lyapunov method.
Simulation is given to show the effectiveness of this algorithm.
\end{abstract}

{\bf Keywords:} Cooperative control; deterministic learning; neural network; multi-agent systems; distributed adaptive learning and control; unmanned ground vehicles.

\section{Introduction}

The two-wheel-driven, unicycle-type vehicle is one of the most common mobile robot platforms, and many research results have been published regarding this system \cite{yu2015trajectory, chen2015simple, seyboth2014collective, dong2018cooperative}.
There are two major challenges for controlling this system: the knowledge of all state variables, and the actuate modeling of the system.
For the unicycle-type vehicle that we use in this chapter, the vehicle position and velocity are both required for the trajectory tracking control.
The position of the vehicle can be obtained using cameras or GPS signals, while direct measurement of the vehicle velocity is difficult.
State observer has been proposed to estimate the full state of the system using the measured signals \cite{luenberger1964observing, luenberger1971introduction}, however, traditional observers require the knowledge of the system model for accurate state estimations.
High-gain observer has been proposed to estimate the unmeasured state variables in case that  the system model is not fully known to the observer, and the estimated states can be used for control purposes \cite{lee1997adaptive, khalil2008high, Main, almuatazbellah2018high}.
In this chapter, we follow the standard high-gain observer design method \cite{khalil2008high} to obtain the estimation of vehicle velocity using the measured vehicle position.

For the second challenge, adaptive control has been introduced to deal with system uncertainties \cite{rossomando2015identification, miao2015adaptive}, in which neural network (NN) based control is able to further deal with nonlinear system uncertainties \cite{fierro1998control, rossomando2015identification}.
Though tracking control can be achieved by NN-based adaptive control, however, traditional NN-based control methods failed to achieve parameter (NN weight) convergence.
This shortage requires the controller to update the system parameter (NN weight) all the time when the controller is operating, which is time consuming and computational demanding.
To overcome this deficiency, a deterministic learning (DL) method has been proposed to model the system uncertainties under the partial persistency of excitation (PE) condition \cite{DL}.
To be more specific, it has been shown that the system uncertainties can be accurately modeled with a sufficient large number of radial basis function (RBF) NNs, and local NN weights online updated by DL will converge to their optimal values, provided that the input signal of the RBFNNs is recurrent.

Since the RBFNN estimation is locally accurate around the recurrent trajectory, this becomes a disadvantage when there exists multiple tracking tasks.
The learned knowledge of the system uncertainties, presented by the RBFNNs, cannot be directly applied on a different control task, and it will need a significant amount of storage space for a large number of different tasks.
In recent years, distributed control is a rising topic regarding the control of multiple coordinated agents \cite{cai2015adaptive, yuan2017distributed, yuan2017formation, YuaZD.ISA19, yuan2019cooperative, yuan2019distributed}.
In this chapter, we took the idea of communicating inside the multi-agent system (MAS) and apply it on DL, such that in the learning phase, any vehicle in the MAS is able to learn the unmodeled dynamics not only along its own trajectory, but along the trajectories of all other vehicle agents in this MAS as well.
In other words, the NN weight of any vehicle in this MAS will converge to a common constant, which presents the unmodeled dynamics along the union trajectory of all vehicles, and any vehicle in the MAS is able to use this knowledge to achieve trajectory tracking for any control task learned in the learning phase.

The main contributions of this chapter are summarized as follows.
\begin{enumerate}[label=\roman*)]
	\item 
	A high-gain observer is introduced to estimate the vehicle velocities using the measurement of vehicle position.
	\item 
	An observer and RBFNN-based adaptive learning control algorithm is developed for a multi-vehicle system, such that each vehicle agent will be able to follow the desired reference trajectory.
	\item 
	An online cooperative adaptive NN learning law is proposed, such that the RBFNN weight of all vehicle agents will converge to one common value, which represents the unmodeled dynamics of the vehicle along the union trajectories experienced by all vehicle agents.
	\item 
	An observer and experience-based controller is developed using the common NN model obtained from the learning phase, such that vehicles are able to follow the reference trajectory experienced by any vehicle before with improved control performance.
\end{enumerate}

In the following sections, we briefly describe some preliminaries on graph theory and RBFNNs based DL method, then present the vehicle dynamics and the problem statement, all in section 2.
The main results of this chapter, including the high-gain observer design, CDL-based trajectory tracking control, accurate cooperative learning using RBF NNs, and experience-based trajectory tracking control, are provided in sections 3 and 4, respectively.
Simulation results of an example with four vehicles running three different tasks are provided in section 5.
The conclusions are drawn in section 6.

\textbf{Notations.} $\mathbb{R}$, $\mathbb{R}_+$ and $\mathbb{Z}_+$ denote, respectively, the set of real numbers, the set of positive real numbers and the set of positive integers;
$\mathbb{R}^{m \times n}$ denotes the set of $m \times n$ real matrices; $\mathbb{R}^n$ denotes the set of $n \times 1$ real column vectors; $I_n$ denotes the $n \times n$ identity matrix; $O_{m \times n}$ denotes the zero matrix with dimension of $m \times n$;
Subscript $(\cdot)_{k}$ denotes the $k^{th}$ column vector of a matrix;
$|\cdot|$ is the absolute value of a real number, and $||\cdot||$ is the 2-norm of a vector or a matrix, i.e. $||x||=(x^Tx)^{\frac{1}{2}}$;
$\dot{z}$ denotes the total derivative of $z$ with respect to the time;
$\partial / \partial \mathbf{z}$ denotes the Jacobian matrix as
$\frac{\partial}{\partial \mathbf{z}} = \mtx{\frac{\partial}{\partial z_1} & \cdots & \frac{\partial}{\partial z_n}}$.

\section{Preliminaries and problem statement}

\subsection{Graph theory}
In a graph defined as $\mathcal{G} = (\mathcal{V}, \mathcal{E}, \mathcal{A})$, the elements of $\mathcal{V} = \{1, 2, \dots ,n\}$ are called vertices, the elements of $\mathcal{E}$ are pairs $(i, j)$ with $i, j \in \mathcal{V}, i \neq j$ called edges, and the matrix $\mathcal{A}$ is called the adjacency matrix.
If $(i, j) \in \mathcal{E}$, then agent $i$ is able to receive information from agent $j$, and agent $i$ and $j$ are called adjacent.
The adjacency matrix is thus defined as $\mathcal{A} = [a_{ij}]_{n \times n}$, in which $a_{ij} > 0$ if and only if $(i, j) \in \mathcal{E}$, and $a_{ij} = 0$ otherwise.
For any two nodes $v_i, v_j \in \mathcal{V}$, if there exists a path between them, then the graph $\mathcal{G}$ is called connected.
Furthermore, the graph $\mathcal{G}$ is called fixed if $\mathcal{E}$ and $\mathcal{A}$ do not change over time, and called undirected if $\forall (i, j) \in \mathcal{E}$, pair $(j, i)$ is also in $\mathcal{E}$.
According to \cite{agaev2006matrix}, for the Laplacian matrix $L = [l_{ij}]_{n \times n}$ associated with the undirected graph $\mathcal{G}$, in which
$
	l_{ij} = \begin{cases}
		\sum_{j = 1, j \neq i}^{n} a_{ij} &\quad i = j \\
		-a_{ij} &\quad i \neq j \\
	\end{cases}.
$
If the graph is connected, then $L$ is a positive semi-definite symmetric matrix, with one zero eigenvalue and all other eigenvalues being positive and hence, $\operatorname{rank}(L) \leq n - 1$.

\subsection{Localized RBF Neural Networks and Deterministic Learning}
The RBF networks can be described by $f_{nn}(Z)=\sum_{i=1}^{Nn}w_is_i(Z)=W^TS(Z)$ \cite{park1991universal}, where $Z\in \Omega_Z \subset \mathbb{R}^q$ is the input vector, $W=[w_1,\cdots,w_{N_n}]^T\in \mathbb{R}^{N_n}$ is the weight vector, $N_n$ is the NN node number, and $S(Z)=[s_1(||Z-\mu_1||),\cdots,s_{N_n}(||Z-\mu_{N_n}||)]^T$, with $s_i(\cdot)$ being a radial basis function, and $\mu_i$ $(i=1,2,\cdots,N_n)$ being distinct points in state space.
The Gaussian function $s_i(||Z-\mu_i||)=exp[\frac{-(Z-\mu_i)^T(Z-\mu_i)}{\sigma^2}]$ is one of the most commonly used radial basis functions, where $\mu_i=[\mu_{i1},\mu_{i2},\cdots,\mu_{iq}]^T$ is the center of the receptive field and $\sigma_i$ is the width of the receptive field.
The Gaussian function belongs to the class of localized RBFs in the sense that $s_i(||Z-\mu_i||)\rightarrow 0$ as $||Z||\rightarrow \infty$.
It is easily seen that $S(Z)$ is bounded and there exists a real constant $S_M \in \mathbb{R}_+$ such that $ ||S(Z)||\leq S_M$ \cite{DL}.

It has been shown in \cite{park1991universal,buhmann2003radial} that for any continuous function $f(Z):\Omega_Z \rightarrow \mathbb{R}$ where $\Omega_Z\subset \mathbb{R}^q$ is a compact set, and for the NN approximator, where the node number $N_n$ is sufficiently large, there exists an ideal constant weight vector $W^*$, such that for any $\epsilon^*>0$, $f(Z)=W^{*T}S(Z)+\epsilon,\,\forall Z\in \Omega_Z$, where $|\epsilon|<\epsilon^*$ is the ideal approximation error.
The ideal weight vector $W^*$ is an ``artificial'' quantity required for analysis, and is defined as the value of $W$ that minimizes $|\epsilon|$ for all $Z\in\Omega_Z\subset\mathbb{R}^q$, i.e. $W^*:= \operatorname{arg}\textup{min}_{W\in\mathbb{R}^{N_n}}\{\textup{sup}_{Z\in\Omega_Z}|f(Z)-W^TS(Z)|\}$.
Moreover, based on the localization property of RBF NNs \cite{DL}, for any bounded trajectory $Z(t)$ within the compact set $\Omega_Z$, $f(Z)$ can be approximated by using a limited number of neurons located in a local region along the trajectory: $f(Z)=W^{*T}_\zeta S_\zeta(Z)+\epsilon_\zeta$, where $\epsilon_\zeta$ is the approximation error, with $\epsilon_\zeta=O(\epsilon)=O(\epsilon^*)$, $S_\zeta(Z)=[s_{j1}(Z),\cdots,s_{j\zeta}(Z)]^T\in\mathbb{R}^{N_\zeta}$, $W_\zeta^*=[w^*_{j1},\cdots,w^*_{j\zeta}]^T\in\mathbb{R}^{N_\zeta}$, $N_\zeta<N_n$, and the integers $j_i=j_1,\cdots,j_\zeta$ are defined by $|s_{j_i}(Z_p)|>\theta$ ($\theta>0$ is a small positive constant) for some $Z_p\in Z(k)$.

It is shown in \cite{DL} that for a localized RBF network $W^TS(Z)$ whose centers are placed on a regular lattice, almost any recurrent trajectory
$Z(k)$ (see \cite{DL} for detailed definition of ``recurrent'' trajectories) can lead to the satisfaction of the PE condition of the regressor subvector $S_\zeta(Z)$. This result is recalled in the following lemma.
\begin{lemma}[\cite{DL,wang2006learning}] \label{DL}
Consider any recurrent trajectory $Z(k)$: $\mathbb{Z}_+\rightarrow\mathbb{R}^q$. $Z(k)$ remains in a bounded compact set $\Omega_Z\subset \mathbb{R}^q$, then for RBF network $W^TS(Z)$ with centers placed on a regular lattice (large enough to cover compact set $\Omega_Z$), the regressor subvector $S_\zeta(Z)$ consisting of RBFs with centers located in a small neighborhood of $Z(k)$ is persistently exciting.
\end{lemma}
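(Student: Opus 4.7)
The plan is to establish persistent excitation of the local regressor subvector $S_\zeta(Z)$ by combining two structural facts: the spatial localization of Gaussian RBFs on a regular lattice, and the temporal recurrence of $Z(k)$ on the compact set $\Omega_Z$. I would first fix the neighborhood that defines $S_\zeta$: given the lattice spacing and the RBF width $\sigma$, choose a threshold $\theta$ so that the neighborhood around the current $Z(k)$ contains a fixed finite number $N_\zeta$ of centers $\mu_{j_1},\dots,\mu_{j_\zeta}$ which form a full-dimensional simplex in $\mathbb{R}^q$. Outside this neighborhood, the exponential decay $s_i(\|Z-\mu_i\|)\le \exp(-d^2/\sigma^2)$ makes the remaining components of $S(Z)$ negligibly small, so any PE-type lower bound on the full regressor effectively transfers to $S_\zeta$.

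Next I would exploit recurrence. A recurrent trajectory $Z(k)$ returns to every neighborhood of every one of its points infinitely often within a bounded period $T^*$. I would fix a representative window $[t,t+T]$ with $T\ge T^*$, and inside this window identify a finite collection of sub-windows during which $Z(k)$ sweeps through a neighborhood of each of the active centers $\mu_{j_i}$. On each such sub-window, the corresponding Gaussian $s_{j_i}(\|Z(k)-\mu_{j_i}\|)$ attains an order-one value, while the other $s_{j_l}$ are comparatively small because of the geometric separation enforced by the regular lattice. This gives a diagonally dominant contribution to the Gram-type matrix $\int_t^{t+T} S_\zeta(Z(\tau))S_\zeta(Z(\tau))^\top \, d\tau$.

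I would then conclude by showing that this integrated matrix satisfies
\begin{equation}
\int_t^{t+T} S_\zeta(Z(\tau))S_\zeta(Z(\tau))^\top \, d\tau \;\succeq\; \alpha I_{N_\zeta}
\end{equation}
for some $\alpha>0$ independent of $t$, which is exactly the PE condition for $S_\zeta$. The diagonal entries are bounded below by the sub-window contributions identified above, and the off-diagonal entries are bounded in magnitude by the cross products of Gaussians centered at well-separated lattice points, which the localization property keeps strictly smaller than the diagonal terms. A Gershgorin-type argument then yields the uniform positive lower bound $\alpha$. Uniformity in $t$ follows because recurrence holds over every interval of length $T\ge T^*$ and the lattice geometry is translation-invariant.

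The main obstacle I anticipate is the bookkeeping that converts the qualitative statement ``recurrent trajectory visits a neighborhood of every active center'' into a quantitative uniform lower bound $\alpha>0$ that does not depend on the particular time $t$ or on which active subregion the trajectory currently occupies. This requires choosing the lattice spacing, the RBF width $\sigma$, the selection threshold $\theta$, and the window length $T$ in a consistent way so that (i) enough centers are active to make $S_\zeta$ full-dimensional, (ii) off-diagonal cross-excitations remain dominated by diagonal self-excitations, and (iii) the recurrence period of $Z(k)$ fits within $T$. Once these constants are tied together via the compactness of $\Omega_Z$ and the finiteness of $N_\zeta$, the Gershgorin bound closes the argument and yields the claimed PE of $S_\zeta(Z)$.
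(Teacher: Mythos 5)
The paper does not actually prove this lemma: it is imported verbatim from the deterministic-learning literature (Wang--Hill), so there is no in-paper argument to compare against. Judged on its own terms, your sketch has the right outer shape --- lower-bound the windowed Gram matrix $\int_t^{t+T}S_\zeta(Z(\tau))S_\zeta(Z(\tau))^\top d\tau$ over one recurrence period $T^*$, using localization to discard the far-away centers --- but the step that is supposed to close the argument, diagonal dominance plus Gershgorin, is exactly where the proof breaks down. For the RBF network to achieve a small approximation error $\epsilon^*$ on $\Omega_Z$, the lattice spacing must be comparable to (or smaller than) the width $\sigma$; then at any time instant several neighboring Gaussians are simultaneously at order-one values, and the off-diagonal Gram entries $\int s_{j_i}(Z(\tau))s_{j_l}(Z(\tau))\,d\tau$ for adjacent centers are of the same order as the diagonal entries $\int s_{j_i}^2(Z(\tau))\,d\tau$ (two centers each at distance $d/2$ from a trajectory segment give $s_{j_i}s_{j_l}\approx s_{j_i}^2$ there). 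These Gram matrices are positive definite but notoriously ill-conditioned, not diagonally dominant, so a Gershgorin bound yields nothing. Your claim that during the sub-window near $\mu_{j_i}$ ``the other $s_{j_l}$ are comparatively small'' also does not control the off-diagonal entries, which accumulate over the entire window, not just those sub-windows.

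The standard way to close the gap (Kurdila, Narcowich and Ward, SIAM J.\ Control Optim.\ 1995, which is what the cited proofs in \cite{DL,wang2006learning} rest on) is to use the strict positive definiteness of the Gaussian kernel rather than dominance: sample the trajectory at the times guaranteed by recurrence when it passes within a prescribed distance of each active center, relate $\int_t^{t+T}\bigl(c^\top S_\zeta(Z(\tau))\bigr)^2 d\tau$ to a quadratic form in the interpolation matrix $\bigl[s(\|\mu_{j_i}-\mu_{j_l}\|)\bigr]$, and invoke the Narcowich--Ward lower bound on that matrix's minimal eigenvalue in terms of the separation radius of the lattice. That supplies the uniform $\alpha>0$ you need; recurrence then gives uniformity in $t$ exactly as you describe. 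Two smaller points: the subvector $S_\zeta$ collects centers in a neighborhood of the whole orbit $\{Z(k)\}_k$, not of the current point only, and the ``full-dimensional simplex'' condition on the centers is neither needed nor relevant --- what matters is linear independence of the functions $\tau\mapsto s_{j_i}(Z(\tau))$, which is what the kernel positive-definiteness argument delivers.
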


\subsection{Vehicle model and problem statement}

\begin{figure} [h]
	\centering
		\includegraphics[width=0.35\textwidth]{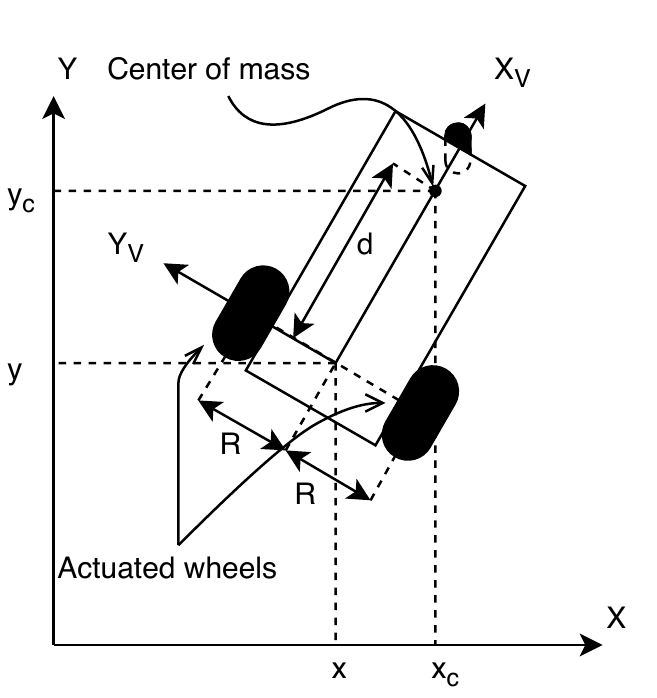}
	\caption{A unicycle-type vehicle}
	\label{unicycle_type_vehicle}
\end{figure}

As shown in Fig. \ref{unicycle_type_vehicle}, this unicycle-type vehicle is a nonholonomic system, with the constraint force preventing the vehicle from sliding along the axis of the actuated wheels.
The nonholonomic constraint can be presented as follows
\begin{equation} \label{constraint}
	A^T(\mathbf{q}_i) \dot{\mathbf q_i} = 0
\end{equation}
in which $A(\mathbf{q}_i) = \mtx{\sin \theta_i & -\cos \theta_i & 0}^T$, and $\mathbf{q}_i = \mtx{x_i & y_i & \theta_i}^T$ is the general coordinates of the $i^\text{th}$ vehicle ($i = 1, 2, \dots, n$, with $n$ being the number of vehicles in the MAS).
($x_i, y_i$) and $\theta_i$ denote the position and orientation of the vehicle with respect to the ground coordinate, respectively.

With this constraint, the degree of freedom of the system is reduced to two.
Independently driven by the two actuated wheels on each side of the vehicle, the non-slippery kinematics of the $i^\text{th}$ vehicle is
\begin{equation} \label{kinematics}
	\dot{\mathbf q}_i
	= \mtx{\dot x_i \\ \dot y_i \\ \dot \theta_i}
	= \mtx{\cos \theta_i & 0 \\ \sin \theta_i & 0 \\ 0 & 1 \\} \mtx{v_i \\ \omega_i}
	\overset{\text{def}}{=} J(\mathbf{q}_i) \mathbf{u}_i
\end{equation}
where $v_i$ and $\omega_i$ are the linear and angular velocities measured at the center between the driving wheels, respectively.
The dynamics of the $i^{th}$ vehicle can be described by \cite{fierro1995control}
\begin{equation} \label{dynamics_q}
	M (\mathbf{q}_i) \ddot{\mathbf{q}}_i + C (\mathbf{q}_i, \dot{\mathbf{q}}_i) \dot{\mathbf{q}_i} + F (\mathbf{q}_i, \dot{\mathbf{q}}_i) + G (\mathbf{q}_i) = B (\mathbf{q}_i) \tau_i + A (\mathbf{q}_i) \lambda_i,
\end{equation}
in which $M \in \mathbb{R}^{3 \times 3}$ is a positive definite matrix that denotes the inertia, $C \in \mathbb{R}^{3 \times 3}$ is the centripetal and Coriolis matrix, $F \in \mathbb{R}^{3 \times 1}$ is the friction vector, $G \in \mathbb{R}^{3 \times 1}$ is the gravity vector.
$\tau_i \in \mathbb{R}^{2 \times 1}$ is a vector of system input, i.e. the torque applied on each driving wheel, $B = \frac{1}{r} \mtx{\cos \theta_i & \cos \theta_i \\ \sin \theta_i & \sin \theta_i \\ R & -R \\} \in \mathbb{R}^{3 \times 2}$ is the input transformation matrix, projecting the system input $\tau$ onto the space spanned by $(x, y, \theta)$, in which $D = 2 R$ is the distance between two actuation wheels, and $r$ is the radius of the wheel.
$\lambda_i$ is a Lagrange multiplier, and $A \lambda_i \in \mathbb{R}^{3 \times 1}$ denotes the constraint force.

Matrices $M$ and $C$ in equation~(\ref{dynamics_q}) can be derived using the Lagrangian equation with the follow steps.
First we calculate the kinetic energy for the $i^\text{th}$ vehicle agent
\begin{equation} \label{kinetic_energy_c}
	T_i = \frac{m (\dot{x}_{ic}^2 + \dot{y}_{ic}^2)}{2} + \frac{I \dot{\theta}_{ic}^2}{2}
\end{equation}
where $m$ is the mass of the vehicle, $I$ is the moment of inertia measured at the center of mass, $x_{ic}$, $y_{ic}$, and $\theta_{ic}$ are the position and orientation of the vehicle at the center of mass, respectively.
The following relation can be obtained from Figure~\ref{unicycle_type_vehicle}:
\begin{equation}
\begin{array}{ll}
		\left\{
			\begin{aligned}
				x_{ic} &= x_i + d \cos \theta_i \\
				y_{ic} &= y_i + d \sin \theta_i \\
				\theta_{ic} &= \theta_i \\
			\end{aligned}
		\right., \quad
		&\left\{
			\begin{aligned}
				\dot{x}_{ic} &= \dot{x}_i - d \dot{\theta} \sin \theta_i \\
				\dot{y}_{ic} &= \dot{y}_i + d \dot{\theta} \cos \theta_i \\
				\dot{\theta}_{ic} &= \dot{\theta}_i \\
			\end{aligned}
		\right.
	\end{array}
\end{equation}
Then equation~(\ref{kinetic_energy_c}) can be rewritten into
\begin{equation} \label{kinetic_energy}
	\begin{split}
		T(\mathbf{q}_i,\dot{\mathbf{q}}_i) &= \frac{m [(\dot{x}_i - d \dot{\theta} \sin \theta_i)^2 + (\dot{y}_i + d \dot{\theta} \cos \theta_i)^2]}{2} + \frac{I \dot{\theta}_i^2}{2} \\
		&= \frac{1}{2} [m \dot{x}_i^2 + m \dot{y}_i^2 + (m d^2 + I) \dot{\theta}^2 - 2 m d \sin \theta \dot{x}_i \dot{\theta}_i + 2 m d \cos \theta \dot{y}_i \dot{\theta}_i ] \\
	 	&= \frac{\dot{\mathbf{q}}_i^T M(\mathbf{q}_i) \dot{\mathbf{q}}_i}{2} \\
	\end{split}
\end{equation}
in which $
	M = \mtx{m & 0 & -m d \sin \theta_i \\
		0 & m & m d \cos \theta_i \\
		-m d \sin \theta_i & m d \cos \theta_i & m d^2 + I \\}
$.
It will be shown later that the inertia matrix $M$ shown above is identical to that in equation~(\ref{dynamics_q}).
Then  the dynamics equation of the system is given by the following Lagrangian equation \cite{book_robotics},
\begin{equation} \label{Lagrangian_eq}
	\frac{\text{d}}{\text{d}t} \left( \frac{\partial L}{\partial \dot{\mathbf{q}}}_i \right)^T - \left( \frac{\partial L}{\partial \mathbf{q}}_i \right)^T = A(\mathbf{q}_i) \lambda_i + \mathbf{Q}_i
\end{equation}
in which $L(\mathbf{q}_i,\dot{\mathbf{q}}_i) = T(\mathbf{q}_i,\dot{\mathbf{q}}_i) - U(\mathbf{q}_i)$ is the Lagrangian of the $i^\text{th}$ vehicle, $U(\mathbf{q}_i)$ is the potential energy of the vehicle agent, $\lambda \in \mathbb{R}^{k \times 1}$ is the Lagrangian multiplier, and $A^T \lambda$ is   the constraint force.
$\mathbf{Q}_i = B(\mathbf{q}_i) [\tau_i - \mathbf{f}(\mathbf{u}_i)]$ denotes the external force, where $\tau_i$ is the force generated by the actuator, and $\mathbf{f}(\mathbf{u}_i)$ is the friction on the actuator.
Then equation~(\ref{Lagrangian_eq}) can be rewritten into
\begin{equation} \label{dynamics_raw}
	M(\mathbf{q}_i) \ddot{\mathbf{q}}_i + \dot{M} \dot{\mathbf{q}}_i - \left( \frac{\partial T_i}{\partial \mathbf{q}_i} \right)^T + \left( \frac{\partial U_i}{\partial \mathbf{q}_i} \right)^T + B(\mathbf{q}_i) \mathbf{f}(\dot{\mathbf{q}}_i) = A(\mathbf{q}_i) \lambda_i + B(\mathbf{q}_i) \tau_i
\end{equation}

By setting $C(\mathbf{q}_i, \dot{\mathbf{q}}_i) \dot{\mathbf{q}}_i = \dot{M} \dot{\mathbf{q}}_i - \left( \frac{\partial T_i}{\partial \mathbf{q}_i} \right)^T$, $F (\mathbf{q}_i, \dot{\mathbf{q}}_i) = B(\mathbf{q}_i) \mathbf{f}(\dot{\mathbf{q}}_i)$, and $G(\mathbf{q}_i) = \left( \frac{\partial U_i}{\partial \mathbf{q}_i} \right)^T$, equation~(\ref{dynamics_raw}) can be thereby transferred into (\ref{dynamics_q}).
Notice that the form of $C_{n \times n}$ is not unique, however, with a proper definition of the matrix $C$, we will have $\dot{M} - 2 C$ to be skew-symmetric.
The $(i,j)^\text{th}$ entry of $C$ is defined as follows \cite{book_robotics}
\begin{equation}
	c_{ij} = \sum_{k = 1}^n c_{ijk} \dot{q}_k
\end{equation}
where $\dot{q}_k$ is the $k^\text{th}$ entry of $\dot{\mathbf{q}}$, and $c_{ijk} = \frac{1}{2} \left( \frac{\partial m_{ij}}{\partial q_k} + \frac{\partial m_{ik}}{\partial q_j} - \frac{\partial m_{jk}}{\partial q_i} \right)$ is defined using the Christoffel symbols of the first kind.
Then we have the centripetal and Coriolis matrix calculated as
$
	C = \mtx{0 & 0 & -m d \dot{\theta}_i \cos \theta_i \\
			0 & 0 & -m d \dot{\theta}_i \sin \theta_i \\
			0 & 0 & 0 \\}$.
Since the vehicle is operating on the ground, the gravity vector $G$ is equal to zero.
The friction vector $F$ is assumed to be a nonlinear function of the general velocity $\mathbf{u}_i$, and is unknown to the controller.

To eliminate the nonholonomic constraint force $A(\mathbf{q}_i) \lambda_i$ from equation~(\ref{dynamics_q}), we left multiplying $J^T(\mathbf{q}_i)$ to the equation, it yields:
\begin{equation}
	J^T M J \dot{\mathbf{u}}_i + J^T (M \dot J + C J) \mathbf{u}_i + J^T F + J^T G = J^T B \tau_i + J^T A \lambda_i
\end{equation}
From equation~(\ref{constraint}) and (\ref{kinematics}), we have $J^T A = \mathbf{0}_{2 \times 1}$, then the dynamic equation of $\mathbf{u}_i$ is simplified as
\begin{equation} \label{dynamics_v}
	\bar{M} (\mathbf{q}_i) \dot{\mathbf{u}}_i + \bar{C} (\mathbf{u}_i) \mathbf{u}_i + \bar{F} (\mathbf{u}_i) + \bar{G} (\mathbf{q}_i) = \bar{\tau}_i,
\end{equation}
where
\begin{equation*}
	\begin{split}
		&\bar M = J^T M J = \mtx{m & 0 \\ 0 & m d^2 + I \\}, \quad
		\bar C = J^T (M \dot J + C J) = \mtx{0 & -m d \dot{\theta}_i \\ m d \dot{\theta}_i & 0}, \\
		&\bar F = J^T F, \quad
		\bar G = J^T G = \mathbf{0}_{2 \times 1}, \quad
		\bar \tau_i = \mtx{\bar \tau_{vi} \\ \bar \tau_{\omega i}}
		= J^T B \tau_i = \mtxr{1/r & 1/r \\ R/r & -R/r} \tau_i. \\
	\end{split}
\end{equation*}
The degree of freedom of the vehicle dynamics is now reduced to two.
Since $J^T B$ is of full rank, then for any transformed torque input $\bar \tau_i$, there exists a unique corresponding actual torque input $\tau_i \in \mathbb R^2$ that applied on each wheel.

The main challenge for controlling the system includes i) the direct measurement of the linear and angular velocities is not feasible, and ii) system parameter matrices $\bar{C}$ and $\bar{F}$ are unknown to the controller.

Based on the above system setup, we are ready to formulate our objective of this chapter.
Consider a group of $n$ homogeneous unicycle-type vehicles, the kinematics and dynamics of each vehicle agent are described by equations~(\ref{kinematics}) and (\ref{dynamics_v}), respectively.
The communication graph of such $n$ vehicles is denoted as $\mathcal{G}$.
Regarding this MAS, we have the following assumption.
	
\begin{assumption} \label{assu_G}
The graph $\mathcal G$ is undirected and connected.
\end{assumption}

The objective of this chapter is to design an output-feedback adaptive learning control law for each vehicle agent in the MAS, such that
\begin{enumerate}[label=\roman*)]
	\item \textit{State estimation:}
	The immeasurable general velocities $\mathbf{u}_i = \mtx{v_i & \omega_i}^T$ can be estimated by a high-gain observer using the measurement of the general coordinates $\mathbf{q}_i = \mtx{x_i & y_i & \theta_i}^T$.
	\item \textit{Trajectory tracking:}
	Each vehicle in the MAS will track its desired reference trajectory, which will be quantified by $(x_{ri} (t), y_{ri} (t), \theta_{ri} (t))$; i.e., $\lim_{t \rightarrow \infty} (x_i (t) - x_{ri} (t)) = 0$, $\lim_{t \rightarrow \infty} (y_i (t) - y_{ri} (t)) = 0$, $\lim_{t \rightarrow \infty} (\theta_i (t) - \theta_{ri} (t)) = 0$.
	\item \textit{Cooperative Learning:}
	The unknown homogeneous dynamics of all the vehicles can be locally accurately identified along the union of the trajectories experienced by all vehicle agents in the MAS.
	\item \textit{Experience based control:}
	The identified/learned knowledge from the cooperative learning phase can be re-utilized by each local vehicle to perform stable trajectory tracking with improved control performance.
\end{enumerate}

In order to apply the deterministic learning theory, we have the following assumption on the reference trajectories.
\begin{assumption} \label{assu_PE}
The reference trajectories $x_{ri} (t)$, $y_{ri} (t)$, $\theta_{ri} (t)$ for all $i = 1, \cdots, n$ are recurrent.
\end{assumption}

\section{Main results}

\subsection{High-gain observer design}

In mobile robotics control, the position of the vehicle can be easily obtained in real time using GPS signals or camera positioning, while the direct measurement of the velocities is much more difficult.
For the control and system estimation purposes, the velocities of the vehicle are required for the controller.
To this end, we follow the high-gain observer design method in \cite{khalil2008high, Main}, and introduce a high-gain observer to estimate the velocities using robot positions.
First, we define two new variables as follows
\begin{equation} \label{p_rotatry_frame}
	\begin{split}
		p_{x_i} &= x_i \cos \theta_i + y_i \sin \theta_i \\
		p_{y_i} &= y_i \cos \theta_i - x_i \sin \theta_i \\
	\end{split}
\end{equation}
Notice that the operation above can be considered as a projecting the vehicle position onto the a frame whose origin is fixed to the origin of ground coordinates, and the axes are parallel to the body-fixed frame of the vehicle.
The coordinates of the vehicle in this rotational frame is $(p_{x_i}, p_{y_i})$ and hence, $p_{x_i}$ and $p_{y_i}$ can be calculated based on the measurement of the position and the orientation.
The rotation rate of this frame equals to the angular velocity of the vehicle $\dot{\theta}_i = \omega_i$.
Based on this, we design the high-gain observer for $\omega$ as
\begin{equation} \label{observer_omega}
	\begin{split}
		\dot{\hat{\theta}}_i &= \hat{\omega}_i + \frac{l_1}{\delta} (\theta_i - \hat{\theta}_i) \\
		\dot{\hat{\omega}}_i &= \frac{l_2}{\delta^2} (\theta_i - \hat{\theta}_i) \\
	\end{split}
\end{equation}
in which $\delta$ is a small positive scalar to be designed, and $l_1$ and $l_2$ are parameters to be chosen, such that $\mtx{-l_1 & 1 \\ -l_2 & 0 \\}$ is Hurwitz stable.
The time derivative of this coordinates defined in (\ref{p_rotatry_frame}) is given by $\dot{p}_{x_i} = v_i + p_{y_i} \omega_i$, and $\dot{p}_{y_i} = -p_{x_i} \omega_i$,
then we design the high-gain observer for $v$ as
\begin{equation} \label{observer_v}
	\begin{split}
		\dot{\hat{p}}_{x_i} &= \hat{v}_i + p_{y_i} \hat{\omega}_i + \frac{l_1}{\delta} (p_{x_i} - \hat{p}_{x_i}) \\
		\dot{\hat{v}}_i &= \frac{l_2}{\delta^2} (p_{x_i} - \hat{p}_{x_i}) \\
	\end{split}
\end{equation}
To prevent peaking while using this high-gain observer and in turn improving the transient response, parameter $\delta$ cannot be too small \cite{Main}.
Due to the use of a globally bounded control, decreasing $\delta$ does not induce peaking phenomenon of the state variables of the system, while the ability to decrease $\delta$ will be limited by practical factors such as measurement noise and sampling rates \cite{lee1997adaptive, oh1997nonlinear}.
According to \cite{khalil2008high}, it is easy to show that the estimation error between the actual and estimated velocities of the $i^\text{th}$ vehicle $\mathbf{z}_i = \mathbf{u}_i - \hat{\mathbf{u}}_i$ will converge to zero, detailed proof is omitted here due to space limitation.

\subsection{Controller design and tracking convergence analysis} \label{Sec_tracking}

\begin{figure} [h]
	\centering
		\includegraphics[width=0.35\textwidth]{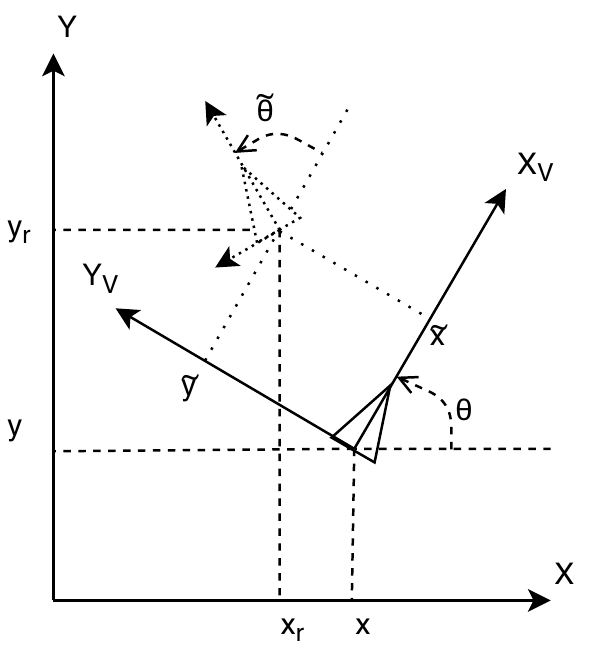}
	\caption{Projecting tracking error onto the body-fixed frame}
	\label{trajectory_tracking}
\end{figure}

After obtaining the linear and angular velocities from the high-gain observer, we now proceed to the trajectory tracking.
First, we define the tracking error $\tilde{\mathbf{q}}_i$ by projecting $\mathbf{q}_{ri} - \mathbf{q}_i$ onto the body coordinate of the $i^{th}$ vehicle, with the $x$ axis set to be the front and $y$ to be the left of the vehicle, as shown in Fig.~\ref{trajectory_tracking}.
\begin{equation}
	\tilde{\mathbf q}_i = \mtx{\tilde x_i \\ \tilde y_i \\ \tilde \theta_i}
	= \begin{bmatrix*}[r]
		\cos \theta_i & \sin \theta_i & 0 \\
		-\sin \theta_i & \cos \theta_i & 0 \\
		0 \quad & 0 \quad & 1 \\
		\end{bmatrix*}
		\mtx{x_{ri} - x_i \\ y_{ri} - y_i \\ \theta_{ri} - \theta_i},
\end{equation}
using the constraint~(\ref{constraint}) and kinematics~(\ref{kinematics}), we have the derivative of the tracking error as follows
\begin{equation} \label{error_dynamics_q}
	\begin{split}
		\dot{\tilde x}_i &= v_{ri} \cos \tilde \theta_i + \omega_i \tilde y_i - v_i \\
		\dot{\tilde y}_i &= v_{ri} \sin \tilde \theta_i - \omega_i \tilde x_i \\
		\dot{\tilde \theta}_i &= \omega_{ri} - \omega_{i} \\
	\end{split}
\end{equation}
where $v_i$ and $\omega_i$ are the linear and angular velocities of the $i^\text{th}$ vehicle, respectively.

In order to utilize the backstepping control theory, we treat $v_i$ and $\omega_i$ in equation~(\ref{error_dynamics_q}) as virtual inputs, then following the methodology from \cite{TargetVelocity}, we can design a stabilizing virtual controller as
\begin{equation} \label{uc}
	\mathbf{u}_{c_i} = \mtx{v_{c_i} \\ \omega_{c_i}}
	= \mtx{v_{r_i} \cos \tilde \theta_i + K_x \tilde x_i \\
		\omega_{r_i} + v_{r_i} K_y \tilde y_i + K_\theta \sin \tilde \theta_i \\},
\end{equation}
in which $K_x$, $K_y$, and $K_\theta$ are all positive constants.
It can be shown that this virtual velocity controller is able to stabilize the closed-loop system~(\ref{error_dynamics_q}) kinematically by replacing $v_i$ and $\omega_i$ with $v_{c_i}$ and $\omega_{c_i}$, respectively.
To this end, we define the following Lyapunov function for the $i^\text{th}$ vehicle
\begin{equation} \label{V1}
	V_{1_i} = \frac{\tilde x_i^2}{2} + \frac{\tilde y_i^2}{2} + \frac{(1 - \cos \tilde \theta_i)}{K_y}
\end{equation}
and the derivative of $V_{1_i}$ is
\begin{equation} \label{V1_dot}
	\begin{split}
		\dot V_{1_i}
		&= \tilde x_i \dot{\tilde x}_i + \tilde y_i \dot{\tilde y}_i
			+ \frac{\sin \tilde \theta_i}{K_y} \dot{\tilde \theta}_i \\
		&= \tilde x_i (v_{r_i} \cos \tilde \theta_i + \omega_i \tilde y_i - v_{c_i})
			+ \tilde y_i (v_{r_i} \sin \tilde \theta_i - \omega_i \tilde x_i)
			+ \frac{\sin \tilde \theta_i}{K_y} (\omega_{r_i} - \omega_{c_i}) \\
		&= \tilde x_i (\omega_i \tilde y_i - K_x \tilde x_i)
			+ \tilde y_i (v_{r_i} \sin \tilde \theta_i - \omega_i \tilde x_i)
			+ \frac{\sin \tilde \theta_i}{K_y} (-v_{r_i} K_y \tilde y_i - K_\theta \sin \tilde \theta_i) \\
		&= -K_x \tilde x_i^2 - \frac{K_\theta}{K_y} \sin^2 \tilde \theta_i \leq 0\\
	\end{split}
\end{equation}

Since $\dot{V}_{1i}$ is negative semi-definite, then we can conclude that this closed-loop system is stable, i.e., the tracking error $\tilde{\mathbf{q}}_i$ for the $i^\text{th}$ vehicle will be bounded.
\begin{remark}
In addition to the stable conclusion above, we could also conclude the asymptotic stability by finding the invariant set of $\dot{V}_{1_i} = 0$.
By setting $\dot{V}_{1_i} = 0$, we have $\tilde{x}_i = 0$ and $\sin \tilde{\theta} = 0$.
Applying this result into equation~(\ref{error_dynamics_q}) and (\ref{uc}), we have the invariant set equals to $\{ \tilde{x}_i = 0, \tilde{y}_i = 0, \sin \tilde{\theta} = 0 \} \cup \{ \tilde{x}_i = 0, \sin \tilde{\theta} = 0, \tilde{y}_i = 0, v_{r_i} = 0, \omega_{ri} = 0 \}$.
With the assumption~\ref{assu_PE}, the velocity of the reference cannot be constant over time, then we can conclude that the only invariant subset of $\dot{V}_{1_i} = 0$ is the origin $\tilde{\mathbf{q}}_i = \mathbf{0}$.
Therefore, we can conclude that the closed-loop system (\ref{error_dynamics_q}) and (\ref{uc}) is asymptotically stable \cite{book_robust}.
\end{remark}

With the idea of backstepping control, we then derive the transformed torque input $\bar{\tau}_i$ for the $i^{th}$ vehicle with the following steps.
By defining the error between the virtual controller $\mathbf{u}_{c_i}$ and the actual velocity $\mathbf{u}_i$ as $\tilde{\mathbf u}_i = \mtx{\tilde{v}_i & \tilde{\omega}_i}^T = \mathbf u_{c_i} - \mathbf u_i$, we can rewrite equation~(\ref{error_dynamics_q}) in terms of $\tilde{v}_i$ and $\tilde{\omega}_i$ as
\begin{equation} \label{error_dynamics_q_u}
	\begin{split}
		\dot{\tilde x}_i &= v_{r_i} \cos \tilde \theta_i + \omega_i \tilde y_i - v_{c_i} + \tilde{v}_i
			= -K_x \tilde x_i + \omega_i \tilde y_i + \tilde{v}_i \\
		\dot{\tilde y}_i &= -\omega_i \tilde x_i + v_{r_i} \sin \tilde \theta_i \\
		\dot{\tilde \theta}_i &= \omega_{r_i} - \omega_{c_i} + \tilde{\omega}_i
			= -v_{r_i} K_y \tilde y_i - K_\theta \sin \tilde \theta_i + \tilde{\omega}_i \\
	\end{split}
\end{equation}

Then we define a new Lyapunov function $V_{2_i} = V_{1_i} + \frac{\tilde{\mathbf{u}}_i^T \bar{M} \tilde{\mathbf{u}}_i}{2}$ for the closed-loop system~(\ref{error_dynamics_q_u}), whose derivative can be written as
\begin{equation} \label{V2_dot}
	\begin{split}
		\dot V_{2_i}
		&= \tilde x_i \dot{\tilde x}_i + \tilde y_i \dot{\tilde y}_i
			+ \frac{\sin \tilde \theta_i}{K_y} \dot{\tilde \theta}_i
			+ \tilde{\mathbf{u}}_i^T \bar{M} \dot{\tilde{\mathbf{u}}}_i \\
		&= \tilde x_i (-K_x \tilde x_i + \omega_i \tilde y_i + \tilde{v}_i)
			+ \tilde y_i (-\omega_i \tilde x_i + v_{r_i} \sin \tilde \theta_i)
			+ \frac{\sin \tilde \theta_i}{K_y} (-v_{r_i} K_y \tilde y_i - K_\theta \sin \tilde \theta_i + \tilde{\omega}_i) \\
		&\quad + \tilde{\mathbf{u}}_i^T \bar{M} \dot{\tilde{\mathbf{u}}}_i \\
		&= -K_x \tilde x_i^2 - \frac{K_\theta}{K_y} \sin^2 \tilde \theta_i
			+ \tilde{\mathbf{u}}_i^T \left( \mtx{\tilde{x}_i \\ \frac{\sin \tilde \theta_i}{K_y}} + \bar{M} \dot{\tilde{\mathbf{u}}}_i \right) \\
	\end{split}
\end{equation}


To make the system stable, the term $\tilde{\mathbf{u}}_i^T \left( \mtx{\tilde{x}_i \\ \frac{\sin \tilde \theta_i}{K_y}} + \bar{M} \dot{\tilde{\mathbf{u}}}_i \right)$ needs to be negative definite.
From the definition of $\tilde{\mathbf{u}}_i$ and equation~(\ref{dynamics_v}), we have
\begin{equation} \label{M_u_dot}
	\begin{split}
		\bar{M} \dot{\tilde{\mathbf{u}}}_i = \bar{M} \dot{\tilde{\mathbf{u}}}_{c_i} - \bar{M} \dot{\mathbf{u}}_i
		= \bar{M}\dot{\tilde{\mathbf{u}}}_{c_i} + \bar C \mathbf{u}_i + \bar F - \bar{\tau}_i \\
	\end{split}
\end{equation}

Motivated from the results of \cite{Main}, it is easy to show that this term is negative definite if $\bar{\tau}_i$ is designed to be
\begin{equation} \label{input_ideal}
	\bar \tau_i = \bar M \dot{\mathbf{u}}_{c_i} + \bar C \mathbf{u}_i + \bar F + K_u \tilde{\mathbf{u}}_i + \mtx{\tilde x_i \\ \frac{\sin \tilde \theta_i}{K_y}\\},
\end{equation}
where $K_u$ is a positive constant.
Since the actual linear and angular velocity of the vehicle is unknown, we use $\hat{v}_i$ and $\hat{\omega}_i$ generated by the high-gain observer~(\ref{observer_omega}) and (\ref{observer_v}) to replace $v_i$ and $\omega_i$ in equation~(\ref{input_ideal}).
From the discussion in previous subsection, the convergence of velocities estimation is guaranteed.

In equation~(\ref{input_ideal}), 
$\bar{C} (\mathbf{u}_i)$ and $\bar{F} (\mathbf{u}_i)$ are unknown to the controller.
To overcome this issue, RBFNN will be used to approximate this nonlinear uncertain term, i.e.,
\begin{equation} \label{unknown_dynamics}
	H(X_i) = \bar{C} (\mathbf{u}_i) \mathbf{u}_i + \bar{F} (\mathbf{u}_i) = W^{*T} S(X_i) + \epsilon_i,
\end{equation}
in which $S(X_i)$ is the vector of RBF, with the variable (RBFNN input) $X_i = \mathbf{u}_i$, $W^*$ is the common ideal estimation weight of this RBFNN, and $\epsilon_i$ is the ideal estimation error, which can be made arbitrarily small given sufficiently large number of neurons.
Consequently, we proposed the implementable controller for the $i^{th}$ vehicle as follows
\begin{equation} \label{input_actual}
	\bar{\tau}_i = \bar{M} \dot{\mathbf{u}}_{c_i} + \hat{W}_i^T S(X_i) + K_u \mtx{v_{c_i} - \hat{v}_i \\ \omega_{c_i} - \hat{\omega}_i} + \mtx{\tilde x_i \\ \frac{\sin \tilde \theta_i}{K_y}},
\end{equation}

For the NN weights used in equation~(\ref{input_actual}), we propose an online NN weight updating law as follows
\begin{equation} \label{weight_updating_law}
	\dot {\hat W}_i = \Gamma S(X_i) \tilde{\mathbf u}_i^T - \gamma \hat W_i - \beta \sum_{j=1}^{n} a_{ij} (\hat W_i - \hat W_j),
\end{equation}
where $\Gamma$, $\gamma$, and $\beta$ are positive constants.

\begin{theorem} \label{Thm_tracking}
Consider the closed-loop system consisting of the $n$ vehicles in the MAS described by equation~(\ref{kinematics}) and (\ref{dynamics_v}), reference trajectory $\mathbf{q}_{r_i} (t)$, high-gain observer~(\ref{observer_omega}) and (\ref{observer_v}), adaptive NN controller (\ref{input_actual}) with the virtual velocity (\ref{uc}), and the online weight updating law (\ref{weight_updating_law}), under the assumptions~\ref{assu_G} and \ref{assu_PE}, then for any bounded initial condition of all the vehicles and $\hat{W}_i = 0$, the tracking error $\tilde{\mathbf{q}}_i$ converges asymptotically to a small neighborhood around zero for all vehicle agents in the MAS.
\end{theorem}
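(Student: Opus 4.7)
The strategy is to establish uniform ultimate boundedness of the joint closed-loop error vector via a composite Lyapunov function summed over all agents, and then recover convergence of $\tilde{\mathbf{q}}_i$ to a small residual set. For each agent I would augment the backstepping Lyapunov function $V_{2_i}$ already constructed above with a quadratic learning penalty on $\tilde{W}_i := W^{*}-\hat{W}_i$, defining
\begin{equation*}
V \;=\; \sum_{i=1}^{n} V_{2_i} \;+\; \frac{1}{2\Gamma}\sum_{i=1}^{n}\operatorname{tr}\!\bigl(\tilde{W}_i^{T}\tilde{W}_i\bigr).
\end{equation*}
Substituting the implementable controller (\ref{input_actual}) into (\ref{M_u_dot}), using the identity $v_{c_i}-\hat{v}_i=\tilde{v}_i+(v_i-\hat{v}_i)$ (and likewise for $\omega$) together with the RBFNN parametrization (\ref{unknown_dynamics}), the velocity-error dynamics become
\begin{equation*}
\bar{M}\dot{\tilde{\mathbf{u}}}_i \;=\; \tilde{W}_i^{T}S(X_i)+\epsilon_i-K_u\tilde{\mathbf{u}}_i-K_u\mathbf{z}_i-\mtx{\tilde{x}_i \\ \sin\tilde{\theta}_i/K_y},
\end{equation*}
where $\mathbf{z}_i:=\mathbf{u}_i-\hat{\mathbf{u}}_i$ is exponentially vanishing by the high-gain observer property recalled right after (\ref{observer_v}). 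Plugging this into (\ref{V2_dot}) cancels the $\tilde{x}_i$ and $\sin\tilde{\theta}_i/K_y$ cross-couplings and leaves a residual cross term $\tilde{\mathbf{u}}_i^{T}\tilde{W}_i^{T}S(X_i)$.

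The core of the argument consists of three cancellations in $\dot V$. First, differentiating the learning term and using (\ref{weight_updating_law}) produces $-\operatorname{tr}(\tilde{W}_i^{T}S(X_i)\tilde{\mathbf{u}}_i^{T})=-\tilde{\mathbf{u}}_i^{T}\tilde{W}_i^{T}S(X_i)$, which exactly cancels the cross term from $\dot V_{2_i}$, i.e.\ the standard adaptive-backstepping identity. Second, the $\sigma$-modification term yields $\tfrac{\gamma}{\Gamma}\operatorname{tr}(\tilde{W}_i^{T}\hat{W}_i)=-\tfrac{\gamma}{\Gamma}\|\tilde{W}_i\|_F^{2}+\tfrac{\gamma}{\Gamma}\operatorname{tr}(\tilde{W}_i^{T}W^{*})$, and Young's inequality absorbs the cross term to leave $-\tfrac{\gamma}{2\Gamma}\|\tilde{W}_i\|_F^{2}+\tfrac{\gamma}{2\Gamma}\|W^{*}\|_F^{2}$. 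Third, using the identity $\hat{W}_i-\hat{W}_j=\tilde{W}_j-\tilde{W}_i$ the consensus term contributes $-\tfrac{\beta}{2\Gamma}\sum_{i,j}a_{ij}\|\tilde{W}_i-\tilde{W}_j\|_F^{2}\leq 0$, where the symmetrization and non-positivity are guaranteed by Assumption~\ref{assu_G} and the Laplacian structure of Section 2.1. The remaining perturbations $\tilde{\mathbf{u}}_i^{T}\epsilon_i$ and $-K_u\tilde{\mathbf{u}}_i^{T}\mathbf{z}_i$ are bounded by Young's inequality into at most $\tfrac{K_u}{2}\|\tilde{\mathbf{u}}_i\|^{2}$ plus terms of order $\|\epsilon_i\|^{2}+\|\mathbf{z}_i\|^{2}$. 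Collecting yields
\begin{equation*}
\dot V \;\leq\; -\sum_{i=1}^{n}\Bigl[K_x\tilde{x}_i^{2}+\tfrac{K_\theta}{K_y}\sin^{2}\tilde{\theta}_i+\tfrac{K_u}{2}\|\tilde{\mathbf{u}}_i\|^{2}+\tfrac{\gamma}{2\Gamma}\|\tilde{W}_i\|_F^{2}\Bigr] + \Delta,
\end{equation*}
with $\Delta$ quadratic in $\|W^{*}\|_F$, $\epsilon^{*}$ and $\|\mathbf{z}_i\|$, which by standard UUB arguments confines $(\tilde{\mathbf{q}}_i,\tilde{\mathbf{u}}_i,\tilde{W}_i)$ to a residual set whose radius shrinks with $\epsilon^{*}$, $\gamma$, and $\delta$.

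The main obstacle I expect is twofold. The first is that $\tilde{y}_i$ does not appear in the negative-definite part of $\dot V$, exactly as in the kinematic analysis; closing the argument for the full $\tilde{\mathbf{q}}_i$ requires repeating the LaSalle-type reasoning of the Remark following (\ref{V1_dot}) on the ultimate-boundedness set, which in turn relies critically on Assumption~\ref{assu_PE} to exclude the degenerate invariant subset where $v_{r_i}=\omega_{r_i}=0$. The second is the coupling between the observer error and the learning channel: $X_i$ feeds $S(\cdot)$ in both the controller and the update law, so one must verify that the initial peaking of the high-gain observer does not expel $X_i$ from the compact set on which the RBFNN approximation (\ref{unknown_dynamics}) is valid; this is handled by the global boundedness of the control and taking $\delta$ sufficiently small, along the lines of the separation-principle results of \cite{khalil2008high} invoked by the authors.
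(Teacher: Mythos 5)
Your proposal is correct and follows essentially the same route as the paper: the same composite Lyapunov function $V=\sum_i V_{2_i}+\tfrac{1}{2\Gamma}\sum_i\operatorname{tr}(\tilde{W}_i^T\tilde{W}_i)$, the same cancellation of the $\tilde{\mathbf{u}}_i^T\tilde{W}_i^TS(X_i)$ cross term against the trace term from (\ref{weight_updating_law}), the same non-positivity of the Laplacian consensus term under Assumption~\ref{assu_G}, and the same Barbalat/UUB conclusion. If anything you are more careful than the paper in three places it glosses over — the explicit $-K_u\mathbf{z}_i$ observer-error term in the velocity-error dynamics, the Young's-inequality treatment of the $\sigma$-modification (the paper instead states $\dot V\leq 0$ only on a conditional region), and the LaSalle-type step needed to recover $\tilde{y}_i\to 0$ since $\tilde{y}_i$ is absent from the negative-definite part of $\dot V$ — so no gaps.
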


\begin{proof}
We first derive the error dynamics of velocity between $\mathbf u_{c_i}$ and $\mathbf u_i$
using equation~(\ref{M_u_dot}) and (\ref{input_actual})
\begin{equation} \label{error_dynamics_u}
	\dot{\tilde{\mathbf{u}}}_i
	= \bar{M}^{-1} \left[ \tilde W_i^T S(X_i) + \mathbf{\epsilon}_i
		- K_u \mtx{v_{c_i} - \hat{v}_i \\ \omega_{c_i} - \hat{\omega}_i}
		- \mtx{\tilde x_i \\ \frac{\sin \tilde \theta_i}{K_y}} \right]
\end{equation}
where $\epsilon_i = \mtx{\epsilon_{v_i} & \epsilon_{\omega_i}}^T$ and $\tilde W_i = W^* - \hat{W}_i$.
Notice that the convergence of $\hat{\mathbf{u}}_i$ to $\mathbf{u}_i$ is guaranteed by the high-gain observer.
Then we derive the error dynamics of NN weight as follows
\begin{equation} \label{error_dynamics_W}
	\begin{split}
		\dot{\tilde W}_i &= -\dot{\hat W}_i = -\Gamma S(X_i) \tilde{\mathbf u}_i^T + \gamma \hat W_i
			+ \beta \sum_{j=1}^{n} a_{ij} (\hat W_i - \hat W_j)) \\
	\end{split}
\end{equation}

For the closed-loop system given by equation~(\ref{error_dynamics_q_u}), (\ref{error_dynamics_u}), and~(\ref{error_dynamics_W}), we can build a positive definite function $V$ as
\begin{equation} \label{V}
	V = \sum_{i=1}^{n} \left[ \frac{\tilde x_i^2}{2} + \frac{\tilde y_i^2}{2}
			+ \frac{(1 - \cos \tilde \theta_i)}{K_y}
			+ \frac{\tilde{\mathbf{u}}_i^T \bar{M} \tilde{\mathbf{u}}_i}{2}
			+ \frac{\operatorname{trace}(\tilde W_i^T \tilde W_i)}{2 \Gamma} \right]
\end{equation}
whose derivative is equal to
\begin{equation}
	\dot V = \sum_{i=1}^{n} \left[ \tilde x_i \dot{\tilde x}_i + \tilde y_i \dot{\tilde y}_i
		+ \frac{\sin \tilde \theta_i}{K_y} \dot{\tilde \theta}_i
		+ \tilde{\mathbf{u}}_i^T \bar{M} \dot{\tilde{\mathbf{u}}}_i
		+ \frac{\operatorname{trace}(\tilde W_i^T \dot{\tilde{W}}_i)}{\Gamma} \right]
\end{equation}

By using equations~(\ref{error_dynamics_u}) and (\ref{error_dynamics_W}), the equation above is equivalent to
\begin{equation} \label{V_dot}
	\begin{split}
		\dot V
		&= \sum_{i=1}^{n} \left\{ \tilde x_i (\tilde v_i + \omega_i \tilde y_i - K_x \tilde x_i)
			+ \tilde y_i (v_{r_i} \sin \tilde \theta_i - \omega_i \tilde x_i)
			+ \frac{\sin \tilde \theta_i}{K_y} (\tilde \omega_i - v_{r_i} K_y \tilde y_i - K_\theta \sin \tilde \theta_i) \right. \\
		&\qquad + \tilde{\mathbf{u}}_i^T
			\left[ \tilde W_i^T S(X_i) + \mathbf{\epsilon}_i - K_u \tilde{\mathbf{u}}_i
			- \mtx{\tilde x_i \\ \frac{\sin \tilde \theta_i}{K_y}} \right] \\
		&\qquad \left. + \operatorname{trace} \left( \tilde W_i^T
			\left[ - S(X_i) \tilde{\mathbf u}_i^T + \frac{\gamma \hat W_i}{\Gamma}
			+ \frac{\beta}{\Gamma} \sum_{j=1}^{n} a_{ij} (\hat W_i - \hat W_j)) \right]
			\right) \right\} \\
		&= \sum_{i=1}^{n} \left\{ -K_x \tilde{x}_i^2 - \frac{K_\theta}{K_y} \sin^2 \tilde{\theta}_i
			- K_u \tilde{\mathbf{u}}_i^T \tilde{\mathbf{u}}_i
			+ \tilde{\mathbf{u}}_i^T \mathbf{\epsilon}_i + \tilde{\mathbf{u}}_i^T [\tilde W_i^T S(X_i)] \right. \\
		&\qquad \left. - \operatorname{trace} \left( [\tilde W_i^T S(X_i)] \tilde{\mathbf u}_i^T \right)
			+ \operatorname{trace} \left( \frac{\gamma \tilde W_i^T \hat W_i}{\Gamma} \right) \right\}
			-\operatorname{trace} \left( \sum_{i=1}^{n}
			\frac{\beta}{\Gamma} \tilde W_i^T \sum_{j=1}^{n} a_{ij} (\hat W_i - \hat W_j)) \right) \\
		&= \sum_{i=1}^{n} \left\{ -K_x \tilde{x}_i^2 - \frac{K_\theta}{K_y} \sin^2 \tilde{\theta}_i
			- K_u \tilde{\mathbf{u}}_i^T \tilde{\mathbf{u}}_i
			+ \tilde{\mathbf{u}}_i^T \mathbf{\epsilon}_i
			+ \frac{\gamma}{\Gamma} \operatorname{trace} \left( \tilde W_i^T \hat W_i \right) \right\}
			- \frac{\beta}{\Gamma} \operatorname{trace} \left( \tilde W^T (L \otimes I) \tilde W \right) \\
	\end{split}
\end{equation}
where $L$ is the Laplacian matrix of $\mathcal{G}$, and $\tilde{W} = \mtx{\tilde{W}_1^T & \cdots & \tilde{W}_n^T}^T$.
Since $\beta$ and $\Gamma$ are all positive, and $L$ is positive semi-definite, then we have $\frac{\beta}{\Gamma} \operatorname{trace} \left( \tilde W^T (L \otimes I) \tilde W \right) \geq 0$.
Notice that the estimation error can be made arbitrary small with a sufficient large number of neurons, and $\gamma$ is the leakage term chosen as a small positive constant.
Therefore, we can conclude that the closed-loop system (\ref{error_dynamics_q_u}), (\ref{error_dynamics_u}), and (\ref{error_dynamics_W}) is stable, i.e. $\dot{V} \leq 0$, if the following condition stands
\begin{equation}
	K_x \tilde{x}_i^2 + \frac{K_\theta}{K_y} \sin^2 \tilde{\theta}_i + K_u \tilde{\mathbf{u}}_i^T \tilde{\mathbf{u}}_i
	\geq \tilde{\mathbf{u}}_i^T \mathbf{\epsilon}_i + \frac{\gamma}{\Gamma} \operatorname{trace} \left( \tilde W_i^T \hat W_i \right)
\end{equation}
Hence, the closed-loop system is stable, and all tracking error are bounded.
Since all variables in (\ref{V_dot}) are continuous (i.e. $\ddot{V}$ is bounded), then with the application of Barbalat's lemma \cite{barbalat1959systemes}, we have $\lim_{t \rightarrow \infty} \dot{V} = 0$, which implies that the tracking error $\tilde{\mathbf{q}}_i$ for all agents will converge to a small neighborhood of zero, whose size depends on the norm of $\tilde{\mathbf{u}}_i^T \mathbf{\epsilon}_i + \frac{\gamma}{\Gamma} \operatorname{trace} \left( \tilde W_i^T \hat W_i \right)$.
\end{proof}

\subsection{Consensus convergence of NN weights} \label{Sec_learning}

In addition to the tracking convergence shown in the previous subsection, we will show that all vehicles in the system is able to learn the unknown vehicle dynamics along the union trajectory (denoted as $\cup_{i=1}^n \zeta_i [X_i(t)]$) experienced by all vehicles in this subsection.

By defining $\tilde v = \mtx{\tilde v_1 & \dots & \tilde v_n}^T$, $\tilde \omega = \mtx{\tilde \omega_1 & \dots & \tilde \omega_n}^T$, $\tilde W_v = \mtx{\tilde W_{1,1} & \dots & \tilde W_{n,1}}^T$, and $\tilde W_\omega = \mtx{\tilde W_{1,2} & \dots & \tilde W_{n,2}}^T$, we combine the error dynamics in equations~(\ref{error_dynamics_u}) and (\ref{error_dynamics_W}) for all vehicles into the following form:
\begin{equation} \label{error_dynamics}
	\mtx{\dot{\tilde v} \\ \dot{\tilde \omega} \\ \dot{\tilde W}_v \\ \dot{\tilde W}_\omega}
	= \mtx{A & B \\ C & D \\} \mtx{\tilde v \\ \tilde \omega \\ \tilde W_v \\ \tilde W_\omega} + E
\end{equation}
in which
\begin{equation*}
	\begin{array}{rclrcl}
			A_{2n \times 2n}
		&=& \mtx{-\frac{K_u}{m} I_n & 0 \\ 0 & -\frac{K_u}{I} I_n \\},
		&	B_{2nN \times 2n}
		&=& \mtx{\frac{\mathbf{S}^T}{m} & 0 \\ 0 & \frac{\mathbf{S}^T}{I} \\}, \\
			C_{2n \times 2nN}
		&=& \mtx{-\Gamma \mathbf{S} & 0 \\ 0 & -\Gamma \mathbf{S} \\},
		&	D_{2nN \times 2nN}
		&=& \mtx{-\beta (L \otimes I_N) & 0 \\ 0 & -\beta (L \otimes I_N) \\}, \\
	\end{array}
\end{equation*}
where $\mathbf{S} = \operatorname{diag}(S(X_1), S(X_2), \dots, S(X_n))$, and
\begin{equation*}
\begin{aligned}
	E_{(2nN + 2nN) \times 1} &= \mtx{E_1 \\ E_2 \\ E_3 \\ E_4}, \quad
	E_1 = \frac{1}{m} \mtx{\epsilon_{v_1} - \tilde x_1 \\
			\vdots \\ \epsilon_{v_n} - \tilde x_n \\}, \quad
	E_2 = \frac{1}{I} \mtx{\epsilon_{\omega_1} - \frac{\sin \tilde \theta_1}{K_y} \\
			\vdots \\ \epsilon_{\omega_n} - \frac{\sin \tilde \theta_n}{K_y}}, \\
	E_3 &= \frac{\gamma}{m} \mtx{\hat{W}_{1,1} \\ \vdots \\ \hat{W}_{n,1} \\}, \quad
	E_4 = \frac{\gamma}{m} \mtx{\hat{W}_{1,2} \\ \vdots \\ \hat{W}_{n,2} \\}.
\end{aligned}
\end{equation*}

As is shown in Theorem~\ref{Thm_tracking}, the tracking error $\tilde{\mathbf{q}}_i$ will converge to a small neighborhood of zero for all vehicle agents in the MAS.
Furthermore, the ideal estimation errors $\epsilon_{vi}$ and $\epsilon_{\omega i}$ can be made arbitrarily small given sufficient number of RBF neurons, and $\gamma$ is chosen to ba a small positive constant, therefore, we can conclude that the norm of $E$ in equation~(\ref{error_dynamics}) is a small value.
In the following theorem, we will show that $W_i = \mtx{W_{i,1} & W_{i,2}}$ converges to a small neighborhood of the common ideal weight $W^*$ for all $i = 1, \dots, n$ under assumptions \ref{assu_G} and \ref{assu_PE}.

Before proceeding further, we denote the system trajectory of the $i^{th}$ vehicle as $\zeta_{i}$ for all $i = 1, \cdots, n$.
Using the same notation from \cite{DL}, $(\cdot)_{\zeta}$ and $(\cdot)_{\bar{\zeta}}$ represent the parts of $(\cdot)$ related to the region close to and away from the trajectory $\zeta$, respectively.

\begin{theorem} \label{Thm_learning}
Consider the error dynamics (\ref{error_dynamics}), under the assumptions~\ref{assu_G} and \ref{assu_PE}, then for any bounded initial condition of all the vehicles and $\hat{W}_i = 0$, along the union of the system trajectories $\cup_{i=1}^n \zeta_i [X_i(t)]$, all local estimated neural weights $\hat W_{\zeta_i}$ used in (\ref{input_actual}) and (\ref{weight_updating_law}) converge to a small neighborhood of their common ideal value $W_{\zeta}^*$, and locally accurate identification of nonlinear uncertain dynamics $H(X(t))$ can be obtained by $\hat{W}_i^T S(X)$ as well as $\bar{W}_i^T S(X)$ for all $X \in \cup_{i=1}^n \zeta_i [X_i(t)]$, where
\bgeq \label{W_bar}
	\bar W_i = \operatorname*{mean}_{t_{a_i} \leq t \leq t_{b_i}} \hat W_i (t)
\edeq
with $[t_{a_i}, t_{b_i}]$ ($t_{b_i} > t_{a_i} > T_i$)  being a time segment after the transient period of tracking control.
\end{theorem}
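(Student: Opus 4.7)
The plan is to treat (\ref{error_dynamics}) as a linear time-varying (LTV) cascade driven by the small residual $E$, and to establish uniform exponential stability of the nominal (unperturbed) part by combining the local partial PE property of each regressor $S(X_i)$ with the graph-Laplacian coupling in the adaptation law. Once that is done, the identification claim follows by relating $\hat W_i$ (and its time average $\bar W_i$) to $W^\ast$ only on the neurons whose centers lie in the union $\cup_{i=1}^n \zeta_i$.

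First I would split each weight vector as $\hat W_i = [\hat W_{i,\zeta_i}^T,\ \hat W_{i,\bar\zeta_i}^T]^T$, keeping neurons whose centers sit in a small neighborhood of $X_i(t)$ separate from those far from $\zeta_i$. By the localization property of the Gaussian RBF and the bound $\|S(Z)\|\le S_M$, the contribution of the $\bar\zeta_i$ neurons to $W^{\ast T} S(X_i)$ is $O(\epsilon^\ast)$ on $\zeta_i$, so along $\zeta_i$ the approximation reduces to $W^{\ast T}_{\zeta_i} S_{\zeta_i}(X_i) + \epsilon'_i$ with $\epsilon'_i$ small. Invoking Lemma~\ref{DL} together with Assumption~\ref{assu_PE}, each subregressor $S_{\zeta_i}(X_i(t))$ is persistently exciting, and this is the only place where recurrence of the reference trajectories is used.

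Next I would rewrite the nominal part of (\ref{error_dynamics}) (i.e.\ set $E=0$) in the block form $\dot\xi = \mathcal{A}(t)\xi$, where $\xi = [\tilde v;\tilde\omega;\tilde W_v;\tilde W_\omega]$. For each scalar component (linear or angular) the block has exactly the structure of the cooperative adaptive error system studied in the CDL literature: a negative-definite velocity block, a skew-symmetric cross coupling through $\mathbf S$, and the consensus term $-\beta(L\otimes I_N)$ in the $\tilde W$ dynamics. The core lemma I would invoke (from \cite{DL,yuan2017distributed}) is that, under Assumption~\ref{assu_G} (so $L$ has a simple zero eigenvalue with eigenvector $\mathbf{1}$), and partial PE of each $S_{\zeta_i}(X_i)$, the $\zeta$-restriction of this LTV system is uniformly exponentially stable, because: (i) on the consensus subspace $\tilde W_{1,\zeta}=\cdots=\tilde W_{n,\zeta}$, the Laplacian term vanishes but the stacked regressor $\operatorname{diag}(S_{\zeta_i})$ still provides PE with respect to the scalar velocity error and forces the common direction to decay; (ii) off the consensus subspace, $-\beta(L\otimes I_N)$ is strictly negative definite. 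Away from $\zeta_i$, the $\bar\zeta$-neurons are effectively decoupled from the velocity error because $S_{\bar\zeta_i}(X_i)\approx 0$, so the $\bar\zeta$-dynamics reduces to the pure Laplacian-plus-leakage flow, whose effect on identification along $\zeta_i$ is negligible.

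With uniform exponential stability of the nominal system in hand, the full system (\ref{error_dynamics}) is a UES LTV system perturbed by the bounded input $E$. From Theorem~\ref{Thm_tracking} together with the $O(\epsilon^\ast)$ bound on $\epsilon_{v_i},\epsilon_{\omega_i}$ and the small leakage constant $\gamma$, one has $\|E\|$ small, so standard ISS/converging-input arguments yield $\|\tilde W_{i,\zeta}\|$ small, i.e.\ $\hat W_{i,\zeta}$ lies in a small neighborhood of the common $W^\ast_{\zeta}$ for every $i$ along the union $\cup_{i=1}^n \zeta_i$. Finally, taking the time average $\bar W_i$ over $[t_{a_i},t_{b_i}]$ after the transient filters out the high-frequency component of $\tilde W_i$ and yields an even tighter identification bound $|H(X)-\bar W_i^T S(X)|=O(\epsilon^\ast)$ for $X\in\cup_{i=1}^n\zeta_i$.

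The main obstacle I anticipate is step three: establishing uniform exponential stability of the nominal cooperative error system from the pair (partial PE along each individual $\zeta_i$, connectedness of $\mathcal G$). The subtlety is that no single agent's regressor need be PE over the full union; PE must be aggregated across agents via the consensus term, and one has to argue carefully that the consensus subspace is also driven to zero because at any time at least one $S_{\zeta_i}(X_i)$ excites the corresponding block. I would handle this by choosing the Lyapunov function already used in (\ref{V}) restricted to the $\zeta$-block, and invoking a cooperative PE lemma in the style of \cite{DL,yuan2017distributed}; the remaining steps are then standard LTV perturbation estimates.
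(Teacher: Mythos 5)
Your proposal is correct and follows essentially the same route as the paper: reduce the claim to uniform (local) exponential stability of the nominal part of (\ref{error_dynamics}), obtain that stability from the partial PE of each $S_{\zeta_i}(X_i)$ (via Lemma~\ref{DL} and Assumption~\ref{assu_PE}) combined with the Laplacian consensus coupling under Assumption~\ref{assu_G}, then conclude smallness of $\tilde W_{\zeta_i}$ by perturbation with the small input $E$ and extend the identification to the full union trajectory by the localization property of the Gaussian RBFs. The only difference is that the paper outsources the cooperative-PE stability step to Lemma 4 of the cited reference \cite{Cooperative_PE} (verifying its Assumptions 1 and 2 with explicit $P$ and $Q$), whereas you sketch the consensus-subspace argument for that lemma yourself; the step you flag as the main obstacle is precisely the one the paper handles by citation.
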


\begin{proof}
According to \cite{DL}, if the nominal part of closed loop system shown in (\ref{error_dynamics}) is uniformly locally exponentially stable (ULES), then $\tilde v$, $\tilde \omega$, $\tilde W_v$, and $\tilde W_\omega$ will converge to a small neighborhood of the origin, whose size depends on the value of $||E||$.

Now the problem boils down to proving ULES of the nominal part of system~(\ref{error_dynamics}).
To this end, we need to resort to the results of Lemma 4 in \cite{Cooperative_PE}.
It is stated that if the Assumptions 1 and 2 therein are satisfied, and the associated vector $S_{\zeta} (X_i)$ is PE for all $i = 1, \cdots, n$, then the nominal part of (\ref{error_dynamics}) is ULES.
The assumption 1 therein is automatically verified since $\mathbf{S}$ is bounded, and Assumption 2 therein also holds, if we set the counterparts $P = \Gamma \mtx{m & 0 \\ 0 & I \\}$ and $Q = -2 \Gamma \mtx{K_v I_n & 0 \\ 0 & K_\omega I_n \\}$.
Furthermore, the PE condition of $S_\zeta (X_i)$ will also be met, if $X_i$ of the learning task is recurrent \cite{DL}, which is guaranteed by Assumption \ref{assu_PE} and results from Theorem \ref{Thm_tracking}.
Therefore, we can obtain the conclusion that $\tilde v$, $\tilde \omega$, $\tilde W_v$, and $\tilde W_\omega$ will converge to a small neighborhood of the origin, whose size depends on the small value of $||E||$.

Similar to \cite{wang2006learning}, the convergence of $\hat{W}_{\zeta i}$ to a small neighborhood of $W_\zeta^*$ implies that for all $X \in \cup_{i=1}^n \zeta_i [X_i(t)]$, we have
\begin{equation} \label{estimation_*2hat}
	\begin{split}
		H(X) &= W_\zeta^{*T} + \epsilon_\zeta
			= \hat{W}_{\zeta_i}^T S_\zeta (X) + \tilde{W}_{\zeta_i}^T S_\zeta (X) + \epsilon_{\zeta i}
			= \hat{W}_{\zeta_i}^T S_\zeta (X) + \epsilon_{1 \zeta_i} \\
	\end{split}
\end{equation}
where $\epsilon_{1 \zeta i} = \tilde{W}_{\zeta i}^T S_\zeta (X) + \epsilon_{\zeta i}$ is close to $\epsilon_{\zeta i}$ due to the convergence of $\tilde{W}_{\zeta i}$.
With the $\bar{W}_i$ defined in (\ref{W_bar}), then equation~(\ref{estimation_*2hat}) can be rewritten into
\begin{equation} \label{estimation_hat2bar}
	\begin{split}
		H(X) &= \hat{W}_{\zeta_i}^T S_\zeta (X) + \epsilon_{1 \zeta_i}
			= \bar{W}_{\zeta_i}^T S_\zeta (X) + \epsilon_{2 \zeta_i} \\
	\end{split}
\end{equation}
where $\bar{W}_{\zeta_i}^T = \mtx{w_{1_\zeta} & \cdots & w_{k_\zeta}}^T$ is a subvector of $\bar{W}_i$ and $\epsilon_{2 \zeta_i}$ is the error using $\bar{W}_{\zeta i}^T S_\zeta (X)$ as the system approximation.
After the transient process, $||\epsilon_{1 \zeta_i}|| - ||\epsilon_{2 \zeta_i}||$ is small for all $i = 1, \cdots, n$.

On the other hand, due to the localization  property of Gaussian RBFs, both $S_{\bar{\zeta}}$ and $\bar{W}_{\bar{\zeta}} S_{\bar{\zeta}} (X)$ are very small.
Hence, along the union trajectory $\cup_{i=1}^n \zeta_i [X_i(t)]$, the entire constant RBF network $\bar{W}^T S(X)$ can be used to approximate the nonlinear uncertain dynamics, demonstrated by the following equivalent equations
\begin{equation} \label{estimation_actual}
	\begin{split}
		H(X) &= W_\zeta^{*T} S_\zeta (X) + \epsilon_\zeta \\
		H(X) &= \hat{W}_{\zeta_i}^T S_\zeta (X) + \hat{W}_{\bar{\zeta}_i}^T S_{\bar{\zeta}} (X) + \epsilon_{1_i} = \hat{W}_i^T S(X) + \epsilon_{1_i} \\
		H(X) &= \bar{W}_{\zeta_i}^T S_\zeta (X) + \bar{W}_{\bar{\zeta}_i}^T S_{\bar{\zeta}} (X) + \epsilon_{2_i} = \bar{W}_i^T S(X) + \epsilon_{2_i} \\
	\end{split}
\end{equation}
where $||\epsilon_{1_i}|| - ||\epsilon_{1 \zeta_i}||$ and $||\epsilon_{2_i}|| - ||\epsilon_{2 \zeta_i}||$ are all small for all $i = 1, \cdots, n$.
Therefore, the conclusion of Theorem \ref{Thm_learning} can be drawn.
\end{proof}

\subsection{Experience-based trajectory tracking control} \label{Sec_experience}

In this section, based on the learning results from the previous subsections, we further propose an experience-based trajectory tracking control method using the knowledge learned in the previous subsection, such that the experience-based controller is able to drive each vehicle to follow any reference trajectory experienced by any vehicle on the learning stage.

To this end, we replace the NN weight $\hat{W}_i$ in equation~(\ref{input_actual}) by the converged constant NN weight $\bar{W}_i$ for the $i^{th}$ vehicle.
Therefore, the experience-based controller for the $i^{th}$ vehicle is constructed as follows
\begin{equation} \label{input_exp}
	\bar{\tau}_i = \bar{M} \dot{\mathbf{u}}_{c_i} + \bar{W}_i^T S(X_i) + K_u \mtx{v_{c_i} - \hat{v}_i \\ \omega_{c_i} - \hat{\omega}_i} + \mtx{\tilde x_i \\ \frac{\sin \tilde \theta_i}{K_y}},
\end{equation}
in which $\dot{\mathbf{u}}_{ci}$ is the derivative of the virtual velocity controller from equation~(\ref{uc}), and $\bar W_i$ is obtained from equation~(\ref{W_bar}) for the $i^{th}$ vehicle.
%
The system model (\ref{kinematics}) and (\ref{dynamics_v}), and the high-gain observer design~(\ref{observer_v}) and (\ref{observer_omega}) remain unchanged.

\begin{theorem} \label{Thm_experience}
Consider the closed-loop system consisting of equation~(\ref{kinematics}) and (\ref{dynamics_v}), reference trajectory $\mathbf{q}_{ri} \in \cup_{j=1}^n \mathbf{q}_j(t)$, high-gain observer~(\ref{observer_v}) and (\ref{observer_omega}), and the experience-based controller (\ref{input_exp}) with virtual velocity (\ref{uc}).
For any bounded initial condition, 
the tracking error $\tilde{\mathbf{q}}_i$ converges asymptotically to a small neighborhood around zero.
\end{theorem}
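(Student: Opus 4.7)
The plan is to mirror the proof of Theorem~\ref{Thm_tracking} but with a simpler Lyapunov functional, since the weights are now frozen at $\bar{W}_i$ rather than being updated online. First I would substitute the experience-based controller (\ref{input_exp}) into the dynamics (\ref{dynamics_v}) to derive the closed-loop error dynamics for $\tilde{\mathbf{u}}_i$. Using (\ref{M_u_dot}) with $\bar{W}_i$ in place of $\hat{W}_i$, and writing $H(X_i) = \bar{C}(\mathbf{u}_i)\mathbf{u}_i + \bar{F}(\mathbf{u}_i)$, the velocity error equation takes the form
\begin{equation*}
    \dot{\tilde{\mathbf{u}}}_i = \bar{M}^{-1}\left[(W^* - \bar{W}_i)^T S(X_i) + \epsilon_i - K_u\mtx{v_{c_i} - \hat{v}_i \\ \omega_{c_i} - \hat{\omega}_i} - \mtx{\tilde{x}_i \\ \tfrac{\sin\tilde\theta_i}{K_y}}\right],
\end{equation*}
where by Theorem~\ref{Thm_learning} applied along the reference trajectory $\mathbf{q}_{ri}\in\cup_{j=1}^n\mathbf{q}_j(t)$, the residual $(W^*-\bar{W}_i)^T S(X_i)$ is uniformly small whenever $X_i$ stays in a neighborhood of the learned union trajectory.

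Next I would build the Lyapunov candidate
\begin{equation*}
    V = \sum_{i=1}^n\left[\frac{\tilde{x}_i^2}{2} + \frac{\tilde{y}_i^2}{2} + \frac{1-\cos\tilde\theta_i}{K_y} + \frac{\tilde{\mathbf{u}}_i^T\bar{M}\tilde{\mathbf{u}}_i}{2}\right],
\end{equation*}
which drops the $\operatorname{trace}(\tilde W_i^T\tilde W_i)/2\Gamma$ term used in (\ref{V}) because no adaptation takes place. Differentiating along trajectories of (\ref{error_dynamics_q_u}) and the new velocity error equation, the cross terms $\tilde{x}_i\omega_i\tilde{y}_i$ and similar ones cancel exactly as in (\ref{V2_dot}), while the coupling with $\tilde{\mathbf{u}}_i$ produces a clean negative-definite contribution $-K_x\tilde{x}_i^2 - (K_\theta/K_y)\sin^2\tilde\theta_i - K_u\|\tilde{\mathbf{u}}_i\|^2$ plus a perturbation $\tilde{\mathbf{u}}_i^T[(W^*-\bar{W}_i)^T S(X_i) + \epsilon_i]$. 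Applying Young's inequality to the perturbation and using the high-gain observer convergence $\hat{\mathbf{u}}_i\to\mathbf{u}_i$ (so that the $\hat{v}_i,\hat{\omega}_i$ in the controller contribute only a vanishing disturbance), I would obtain $\dot V \le -c(\tilde{x}_i^2 + \sin^2\tilde\theta_i + \|\tilde{\mathbf{u}}_i\|^2) + \Delta$, where $\Delta$ collects the combined NN learning error, the ideal RBF approximation error, and the observer transient.

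The main obstacle will be bounding $(W^*-\bar{W}_i)^T S(X_i)$ uniformly in time. A priori this is only guaranteed small on the learned union trajectory, and during the transient $X_i$ may drift away from it. I would handle this by invoking the localization property of the Gaussian RBF (used in Theorem~\ref{Thm_learning} to pass from $\hat W_{\zeta_i}$ to $\bar W_i$): outside the compact neighborhood of $\cup_j\zeta_j$, the regressors $S_{\bar\zeta}(X_i)$ are exponentially small, so $\bar{W}_i^T S(X_i)$ remains close to $H(X_i)$ on any trajectory that eventually stays near the learned set. Combined with Assumption~\ref{assu_PE} (reference is recurrent) and boundedness of the initial condition, this yields a region of attraction in which the residual is uniformly small throughout.

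Finally, from $\dot V \le 0$ outside a small set and boundedness of $\ddot V$ (all signals bounded, $S$ and its derivative bounded, observer errors decaying), I would apply Barbalat's lemma exactly as at the end of the proof of Theorem~\ref{Thm_tracking} to conclude $\tilde x_i,\sin\tilde\theta_i,\tilde{\mathbf{u}}_i \to $ small neighborhoods of zero. Asymptotic convergence of $\tilde y_i$ then follows from the same invariant-set argument used in the remark after (\ref{V1_dot}), invoking recurrence of $v_{ri},\omega_{ri}$ from Assumption~\ref{assu_PE} to rule out the spurious equilibrium branch. The size of the residual neighborhood scales with $\|\Delta\|$, which is improved (relative to Theorem~\ref{Thm_tracking}) because $\bar{W}_i$ is a time-averaged, settled estimate rather than an actively adapting one.
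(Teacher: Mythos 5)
Your proposal follows essentially the same route as the paper's own proof: the identical Lyapunov candidate $\tfrac{\tilde{x}_i^2}{2}+\tfrac{\tilde{y}_i^2}{2}+\tfrac{1-\cos\tilde\theta_i}{K_y}+\tfrac{\tilde{\mathbf{u}}_i^T\bar{M}\tilde{\mathbf{u}}_i}{2}$ (the paper takes it per vehicle rather than summed), the same velocity-error dynamics with residual $\epsilon_{2i}=H(X_i)-\bar{W}_i^T S(X_i)$, which is exactly your $(W^*-\bar{W}_i)^T S(X_i)+\epsilon_i$, and the same Barbalat's-lemma conclusion. Your added discussion of why $\epsilon_{2i}$ is uniformly small (localization of the Gaussian RBFs off the learned union trajectory, per Theorem~\ref{Thm_learning}) is a detail the paper leaves implicit, but it does not change the argument.
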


\begin{proof}
Similar to the proof of Theorem~\ref{Thm_tracking}, by defining $\tilde{\mathbf{q}}_i$ and $\tilde{\mathbf{u}}_i$ to be the error between the position and velocity of the $i^\text{th}$ vehicle and its associated reference trajectory, we have the error dynamics of the $i^{th}$ vehicle as
\begin{equation}
	\begin{split}
		\dot{\tilde{x}}_i &= v_{r_i} \cos \tilde{\theta}_i + \omega_i \tilde{y}_i - v_i
			= \tilde{v}_i + \omega_i \tilde{y}_i - K_x \tilde{x}_i\\
		\dot{\tilde{y}}_i &= v_{r_i} \sin \tilde{\theta}_i - \omega_i \tilde{x}_i \\
		\dot{\tilde{\theta}}_i &= \omega_{r_i} - \omega_{i}
			= \tilde{\omega}_i - v_{r_i} K_y \tilde{y}_i - K_\theta \sin \tilde{\theta}_i \\
		\dot{\tilde{\mathbf{u}}}_i &= \bar{M}^{-1} \left[ H(X_i) - \bar{W}_i^T S(X_i)
			- K_u \mtx{v_{c_i} - \hat{v}_i \\ \omega_{c_i} - \hat{\omega}_i}
			- \mtx{\tilde x_i \\ \frac{\sin \tilde \theta_i}{K_y}} \right] \\
	\end{split}
\end{equation}
With the same high-gain observer design used in the learning-based tracking, the convergence of $\hat{\mathbf{u}}_i$ to $\mathbf{u}_i$ is also guaranteed.
For the closed-loop system shown above, we can build a positive definite function as
\begin{equation} \label{Lyapunov_exp}
	V_i = \frac{\tilde{x}_i^2}{2} + \frac{\tilde{y}_i^2}{2} + \frac{1 - \cos \tilde{\theta}_i}{K_y} + \frac{\tilde{\mathbf{u}}_i^T \bar{M} \tilde{\mathbf{u}}_i}{2}
\end{equation}
and the derivative of $V_i$ is
\begin{equation}
	\begin{split}
		\dot{V}_i
		&= \tilde{x}_i \dot{\tilde{x}}_i + \tilde{y}_i \dot{\tilde{y}}_i
			+ \frac{\sin \tilde{\theta}_i}{K_y} \dot{\tilde{\theta}}_i
			+ \tilde{\mathbf{u}}_i^T \bar{M} \dot{\tilde{\mathbf{u}}}_i \\
		&= \tilde{x}_i (\tilde{v}_i + \omega_i \tilde{y}_i - K_x \tilde{x}_i)
			+ \tilde{y}_i (v_{r_i} \sin \tilde{\theta}_i - \omega_i \tilde{x}_i)
			+ \frac{\sin \tilde{\theta}i}{K_y} (\tilde{\omega}_i - v_{r_i} K_y \tilde{y}_i - K_\theta \sin \tilde{\theta}_i) \\
		&\quad + \tilde{\mathbf{u}}_i^T \left( \mathbf{\epsilon}_{2i} - K_u \tilde{\mathbf{u}}_i - \mtx{\tilde{x}_i \\ \frac{\sin \tilde{\theta}_i}{K_y}} \right) \\
		&= -K_x \tilde{x}_i^2 - \frac{K_\theta}{K_y} \sin^2 \tilde{\theta}_i
			- K_u \tilde{\mathbf{u}}_i^T \tilde{\mathbf{u}}_i
			+ \tilde{\mathbf{u}}_i^T \mathbf{\epsilon}_{2i} \\
	\end{split}
\end{equation}
where $\mathbf{\epsilon}_{2i} = H(X_i) - \bar W_i^T S(X_i)$.
Then following the similar arguments in the proof of Theorem~\ref{Thm_tracking}, given positive $K_x$, $K_y$, $K_\theta$, and $K_u$, then we can conclude that the Lyapunov function $V_i$ is positive definite and $\dot V_i$ is negative semi-definite in the region $K_x \tilde{x}_i^2 + \frac{K_\theta}{K_y} \sin^2 \tilde{\theta}_i + K_u \tilde{\mathbf{u}}_i^T \tilde{\mathbf{u}}_i \geq \tilde{\mathbf{u}}_i^T \bar{\mathbf{\epsilon}}_i$.
Similar to the proof of Theorem~\ref{Thm_tracking}, it can be shown that $\lim_{t \rightarrow \infty} \dot{V_i} = 0$ with Barbalat's lemma, and the tracking errors will converge to a small neighborhood of zero.
\end{proof}


\section{Simulation Studies}


Consider four identical vehicles, whose unknown friction vector is assumed to be a nonlinear function of $v$ and $\omega$ as follows
$
	\bar F = \mtx{0.1 m v_i + 0.05 m v_i^2 \\ 0.2 I \omega_i + 0.1 I \omega_i^2}$,
and since we assume the vehicles are operating on the horizontal plane, the gravitational vector $\bar{G}$ is equal to zero.
The physical parameters of the vehicles are given as $m = 2 \, \text{kg}$, $I = 0.2 \, \text{kg} \cdot \text{m}^2$; $R = 0.15 \: \text{m}$, $r = 0.05 \: \text{m}$.
The the reference trajectories of the three vehicles are given by
\begin{equation*}
	\begin{array}{llll}
		\left\{
			\begin{aligned}
				x_{r_1} &= -\sin t \\
				y_{r_1} &= 2\cos t \\
			\end{aligned}
		\right.
		&\left\{
			\begin{aligned}
				x_{r_2} &= 2\cos t \\
				y_{r_2} &= \sin t \\
			\end{aligned}
		\right.
		&\left\{
			\begin{aligned}
				x_{r_3} &= -2\sin t \\
				y_{r_3} &= 3\cos t \\
			\end{aligned}
		\right.
		&\left\{
			\begin{aligned}
				x_{r_4} &= 3\cos t \\
				y_{r_4} &= 2\sin t \\
			\end{aligned}
		\right.
	\end{array}
\end{equation*}
and for all vehicles, the orientations of reference trajectories and vehicle velocities satisfy the following equations
\begin{equation*}
	\tan \theta_{ri} = \frac{\dot y_{ri}}{\dot x_{ri}},
	v_{ri} = \sqrt{\dot x_{ri}^2 + \dot y_{ri}^2},
	\omega_{ri} = \frac{\dot x_{ri} \ddot y_{ri} - \ddot x_{ri} \dot y_{ri}}{\dot x_{ri}^2 + \dot y_{ri}^2}.
\end{equation*}
The parameters of the observer~(\ref{observer_omega}) and (\ref{observer_v}) are given as $\epsilon = 0.01$, and $l_1 = l_2 = 1$.
The parameters of the controller~(\ref{input_actual}) with (\ref{uc}) are given as $K_x = K_y = K_\theta = 1$, and $K_u = 2$.
The parameters of (\ref{weight_updating_law}) are given as $\Gamma = 10$, $\gamma = 0.001$, and $\beta = 10$.
For each $i = 1, 2, 3, 4$, since $X_i = \mtx{v_i & \omega_i}^T$, we construct the Gaussian RBFNN $\hat{W}_i S(X_i)$ using $N = 5 \times 5 = 25$ neuron nodes with the centers evenly placed over the state space $[0, 4] \times [0, 4]$ and the standard deviation of the Gaussian function equal to $0.7$.
%
The initial position of the vehicles are set at the origin, with the velocities set to be zero,
and the initial weights of RBFNNs are also set to be zero.
The connection between three vehicles is shown in Figure~\ref{Connection}, {and the Laplacian matrix $L$ associated with the graph $\mathcal{G}$ is
\begin{equation*}
	L = \mtxr{2 & -1 &  0 & -1 \\
			-1 &  2 & -1 &  0 \\
			 0 & -1 &  2 & -1 \\
			-1 &  0 & -1 &  2 \\}.
\end{equation*}}
\begin{figure} [h]
	\centering
		\includegraphics[width=0.2\textwidth]{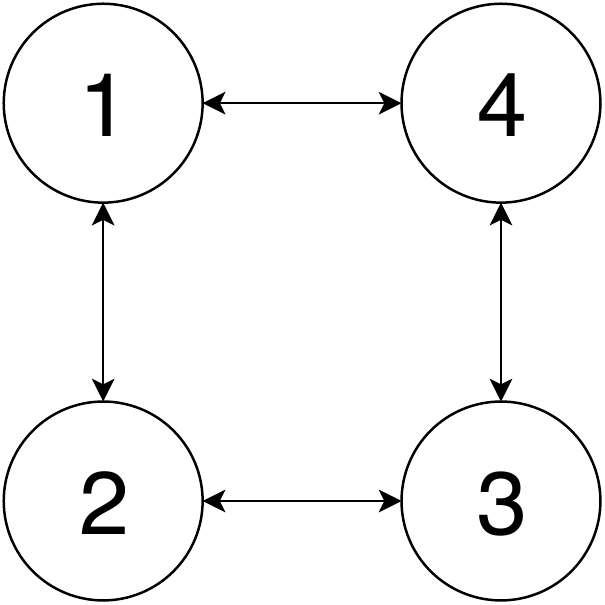}
	\caption{Connection between four vehicles}
	\label{Connection}
\end{figure}

\begin{figure*} [htb!]
	\centering
	\begin{subfigure}{0.45\textwidth}
		\includegraphics[width=\linewidth]{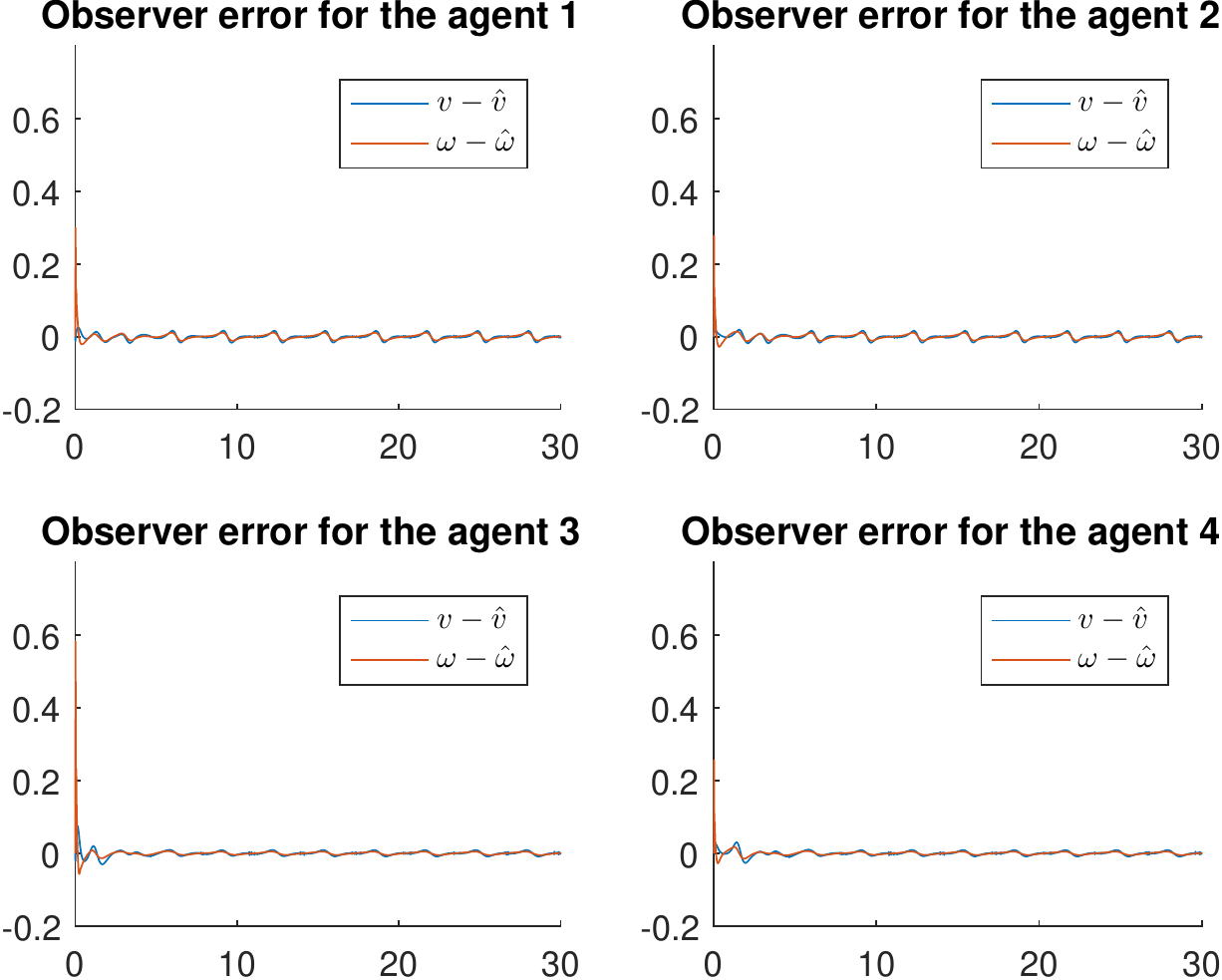}
		\caption{Observer errors using observer (\ref{observer_omega}) and (\ref{observer_v}).}
		\label{DLC_observer_error}
	\end{subfigure} \quad
	\begin{subfigure}{0.45\textwidth}
		\includegraphics[width=\linewidth]{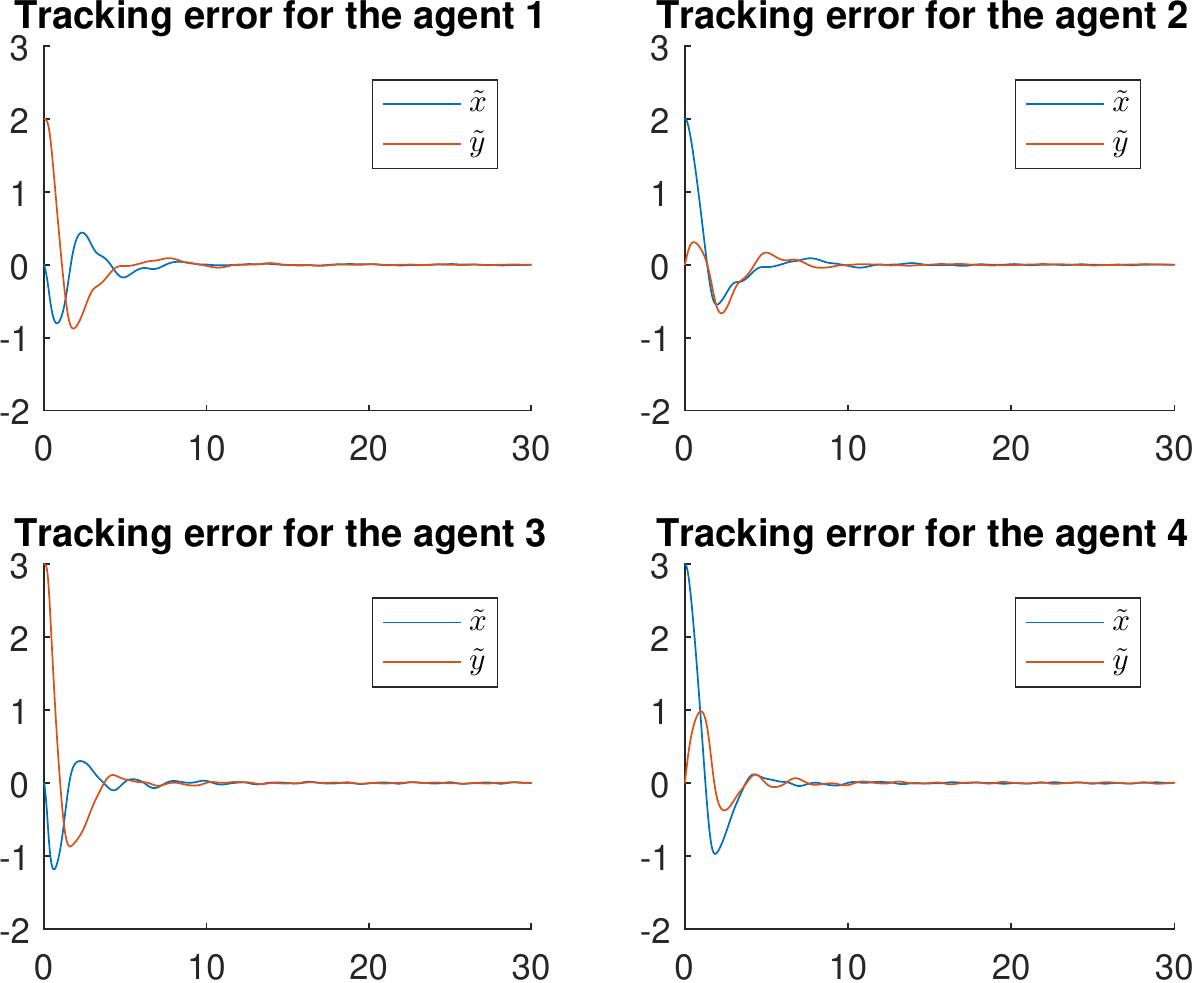}
		\caption{Tracking errors using controller (\ref{input_actual}) with (\ref{uc}) and (\ref{weight_updating_law}).}
		\label{DLC_tracking_error}
	\end{subfigure} \\
	\caption{Observer errors and tracking errors using observer-based controller.}
\end{figure*}

\begin{figure*} [htb!]
	\centering
	\begin{subfigure}{0.3\textwidth}
		\includegraphics[width=\linewidth]{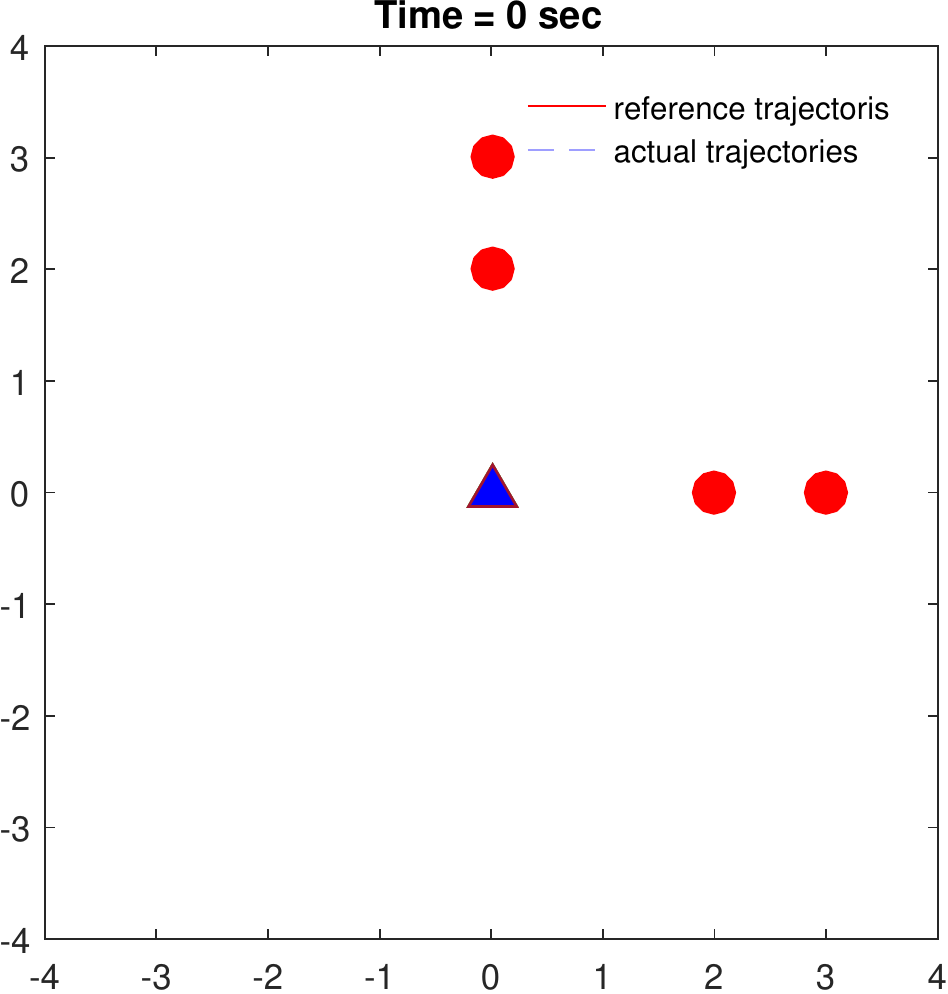}
		\caption{time at 0 seconds}
		\label{DLC_0s}
	\end{subfigure} \quad
	\begin{subfigure}{0.3\textwidth}
		\includegraphics[width=\linewidth]{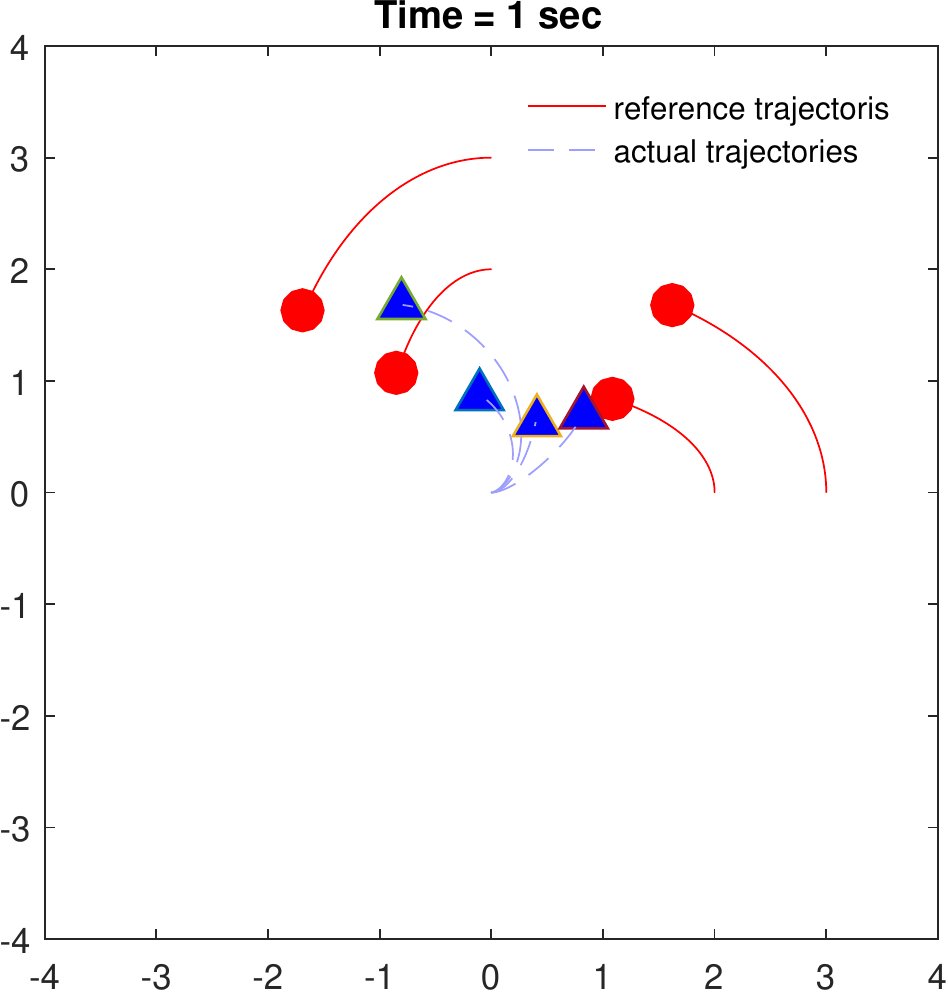}
		\caption{time at 1 seconds}
		\label{DLC_1s}
	\end{subfigure} \quad
	\begin{subfigure}{0.3\textwidth}
		\includegraphics[width=\linewidth]{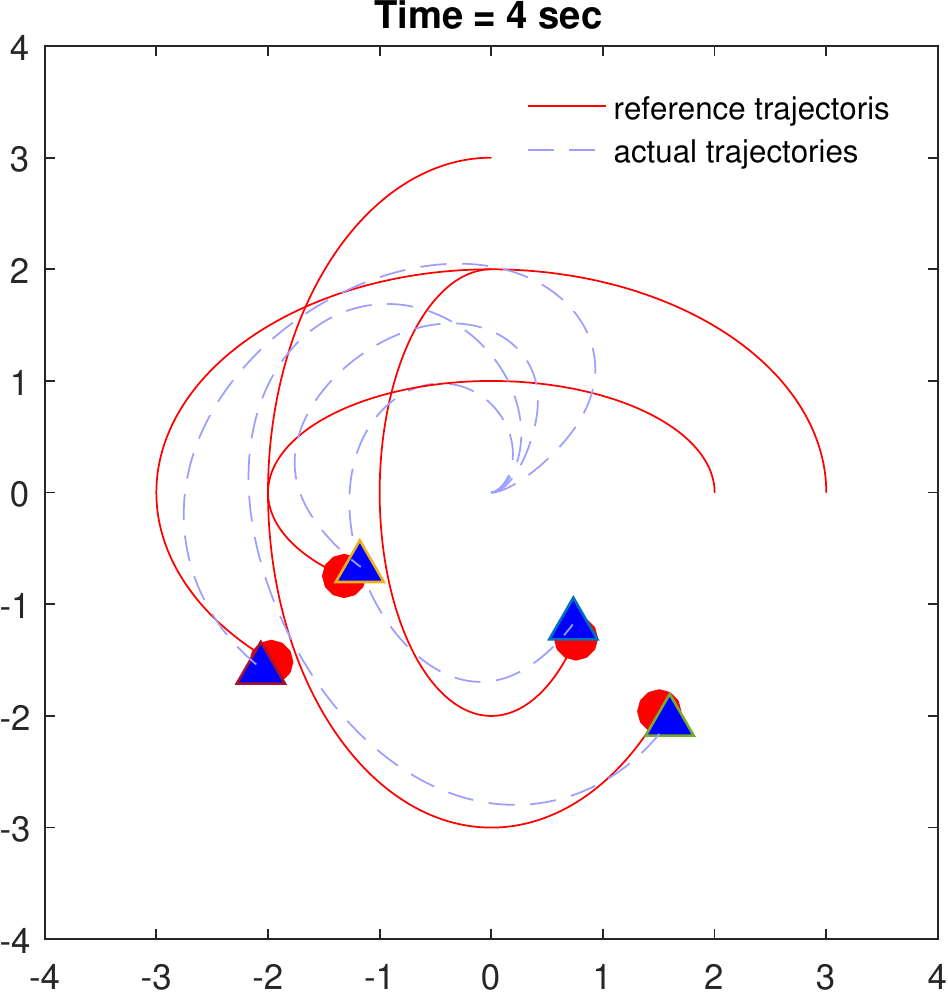}
		\caption{time at 4 seconds}
		\label{DLC_4s}
	\end{subfigure} \\
	\begin{subfigure}{0.3\textwidth}
		\includegraphics[width=\linewidth]{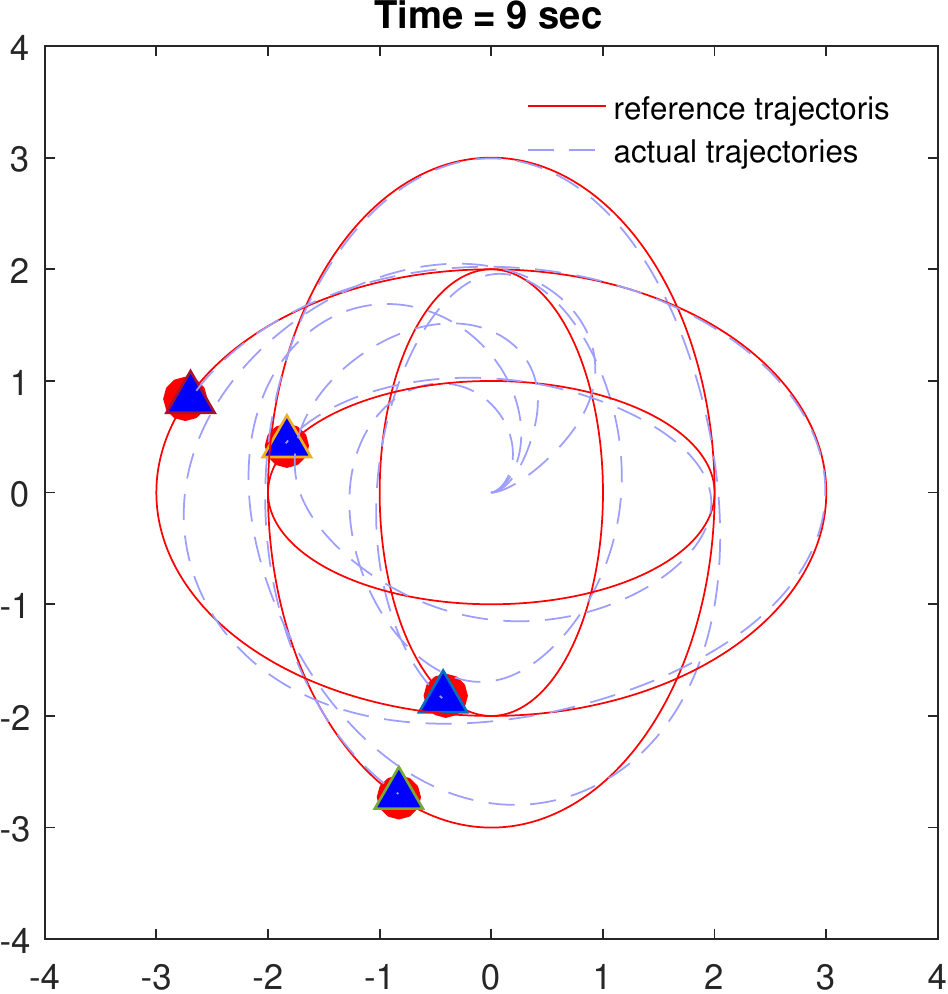}
		\caption{time at 9 seconds}
		\label{DLC_9s}
	\end{subfigure} \quad
	\begin{subfigure}{0.3\textwidth}
		\includegraphics[width=\linewidth]{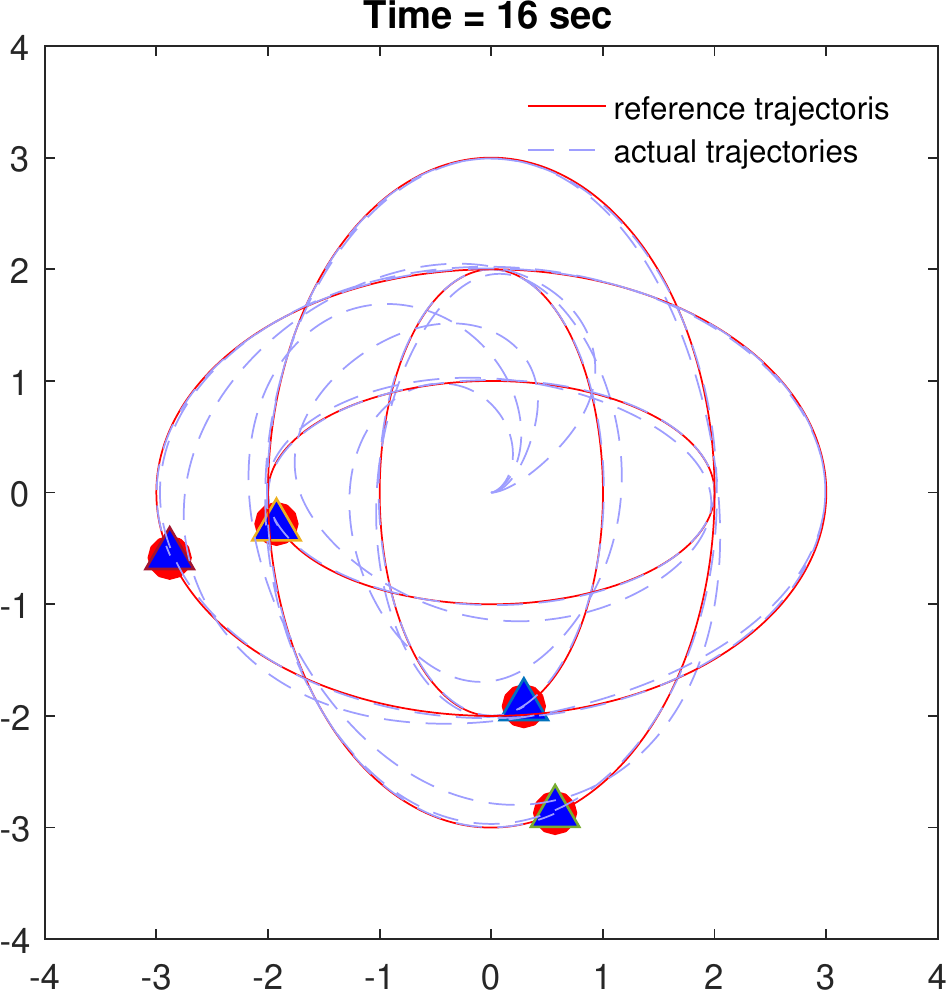}
		\caption{time at 16 seconds}
		\label{DLC_16s}
	\end{subfigure} \quad
	\begin{subfigure}{0.3\textwidth}
		\includegraphics[width=\linewidth]{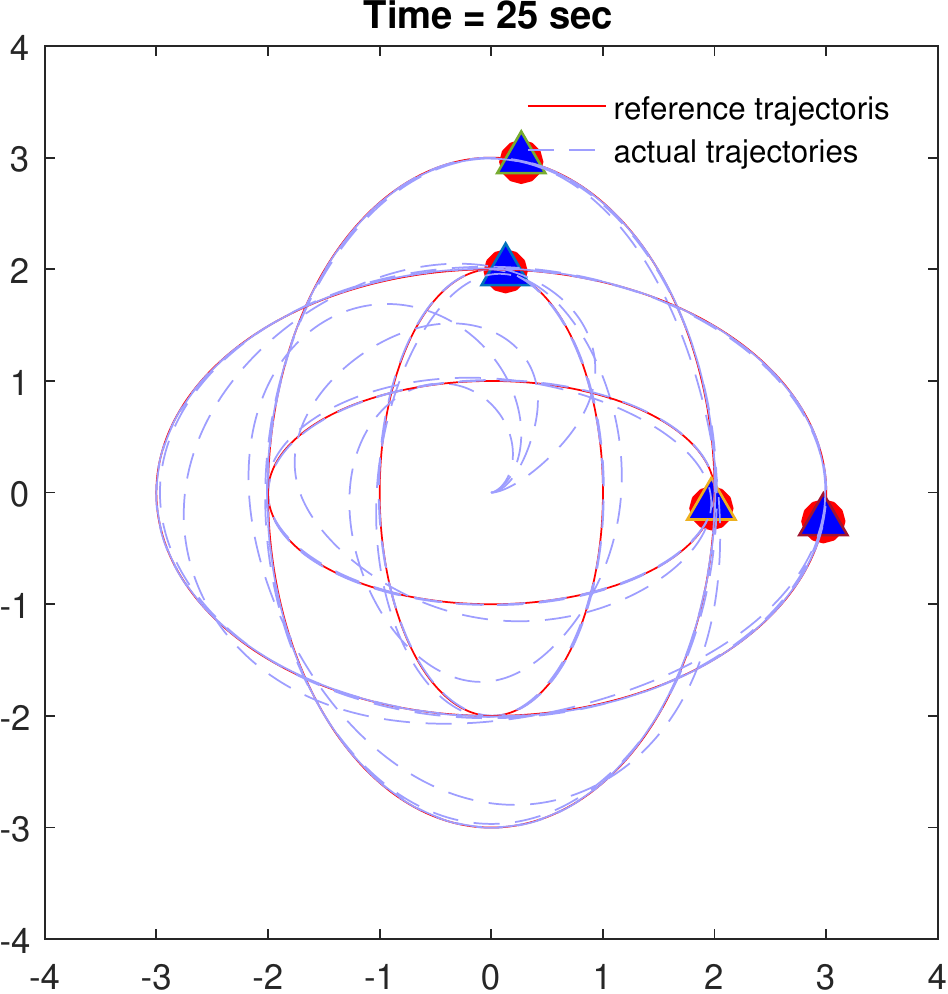}
		\caption{time at 25 seconds}
		\label{DLC_25s}
	\end{subfigure} \\
	\caption{Snapshot of trajectory tracking using controller (\ref{input_actual}) with (\ref{uc}) and (\ref{weight_updating_law}).}
	\label{2D_DLC}
\end{figure*}

\begin{figure} [htb!]
	\centering
	\begin{subfigure}{0.45\textwidth}
		\includegraphics[width=\linewidth]{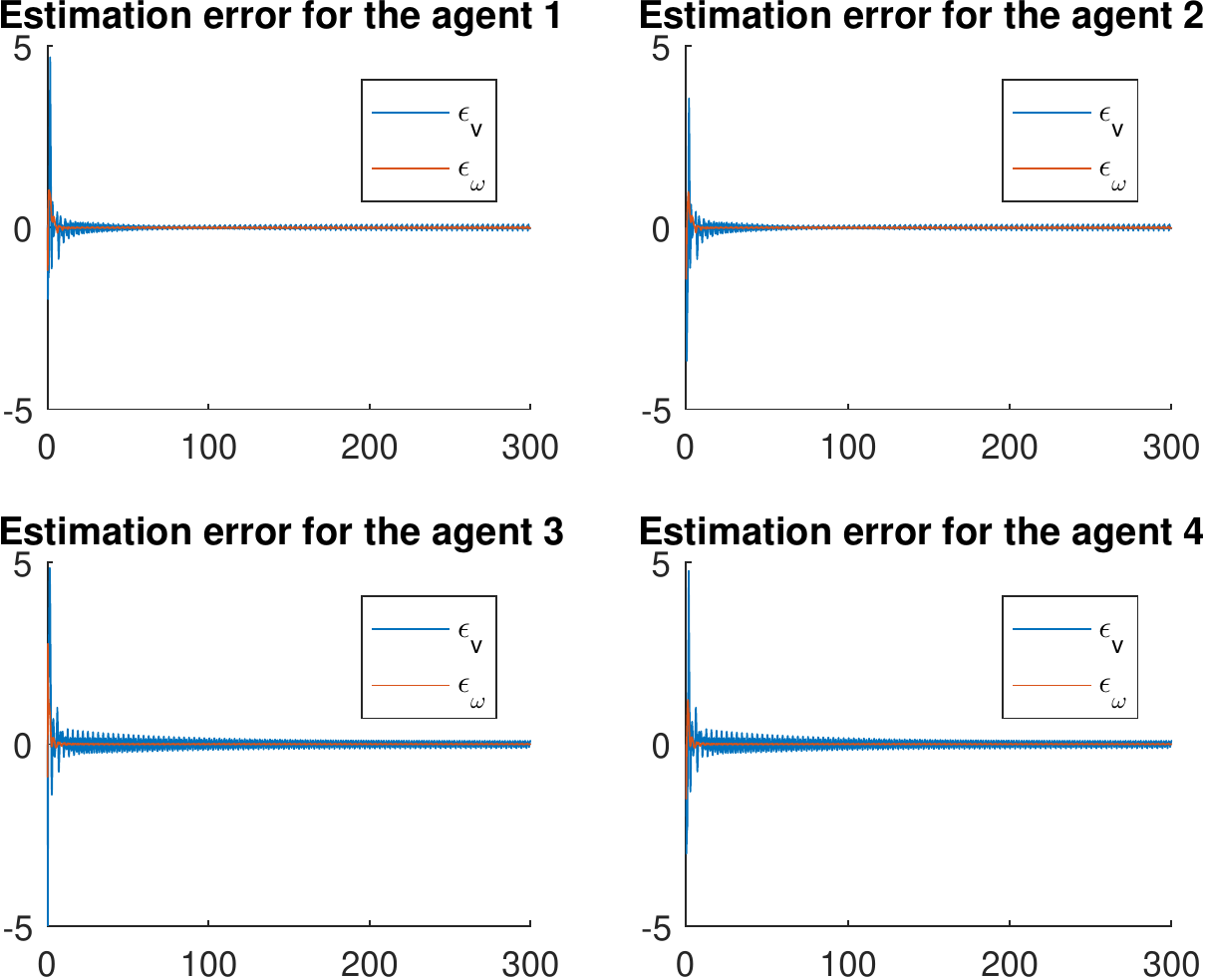}
		\caption{Estimation errors using controller (\ref{input_actual}) with (\ref{uc}) and (\ref{weight_updating_law}).}
		\label{DLC_estimation_error}
	\end{subfigure} \quad
	\begin{subfigure}{0.45\textwidth}
		\includegraphics[width=\linewidth]{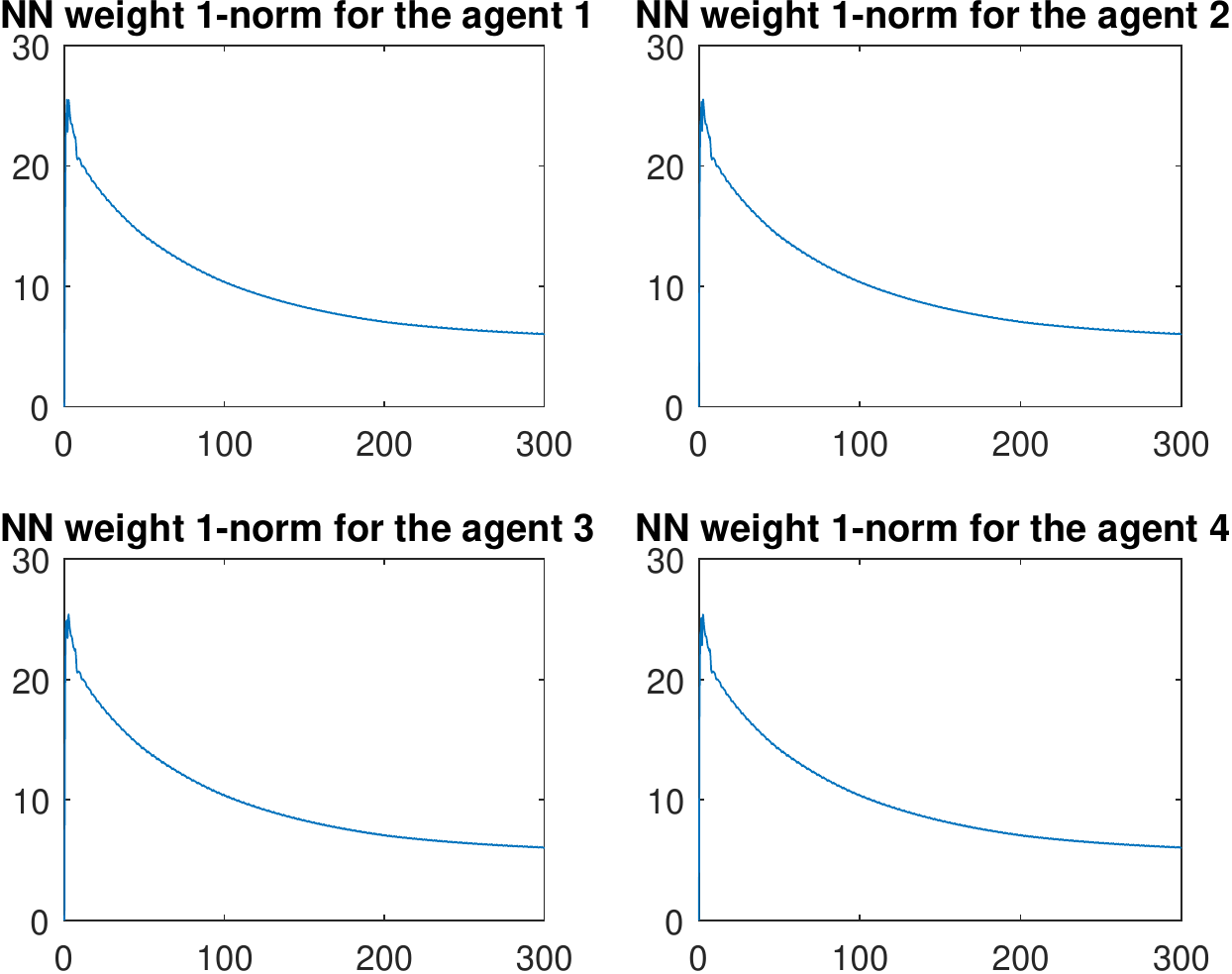}
		\caption{Weight vector 1-norm of $W_v$ and $W_\omega$}
		\label{DLC_weight_norm}
	\end{subfigure} \\
	\caption{Estimation errors and NN weight convergence.}
\end{figure}

Simulation results are shown as following.
Figure~\ref{DLC_observer_error} shows that the observer error will converge to a close neighborhood around zero in a very short time period, and
Figure~\ref{DLC_tracking_error} shows that all tracking errors $\tilde{x}_i$ and $\tilde{y}_i$ will converge to zero, and Figures~\ref{DLC_0s} to \ref{DLC_25s} show that all vehicles (blue triangles) will track its own reference trajectory (red solid circles) on the 2-D frame.
Figure~\ref{DLC_weight_norm} shows that the NN weights of all vehicle agents converge to the same constant, and Figure~\ref{DLC_estimation_error} shows that all RBFNNs of three vehicles are able to accurately estimate the unknown dynamics, as the estimation errors converging to a small neighborhood around zero.

\begin{figure*} [htb!]
	\centering
	\begin{subfigure}{0.45\textwidth}
		\includegraphics[width=\linewidth]{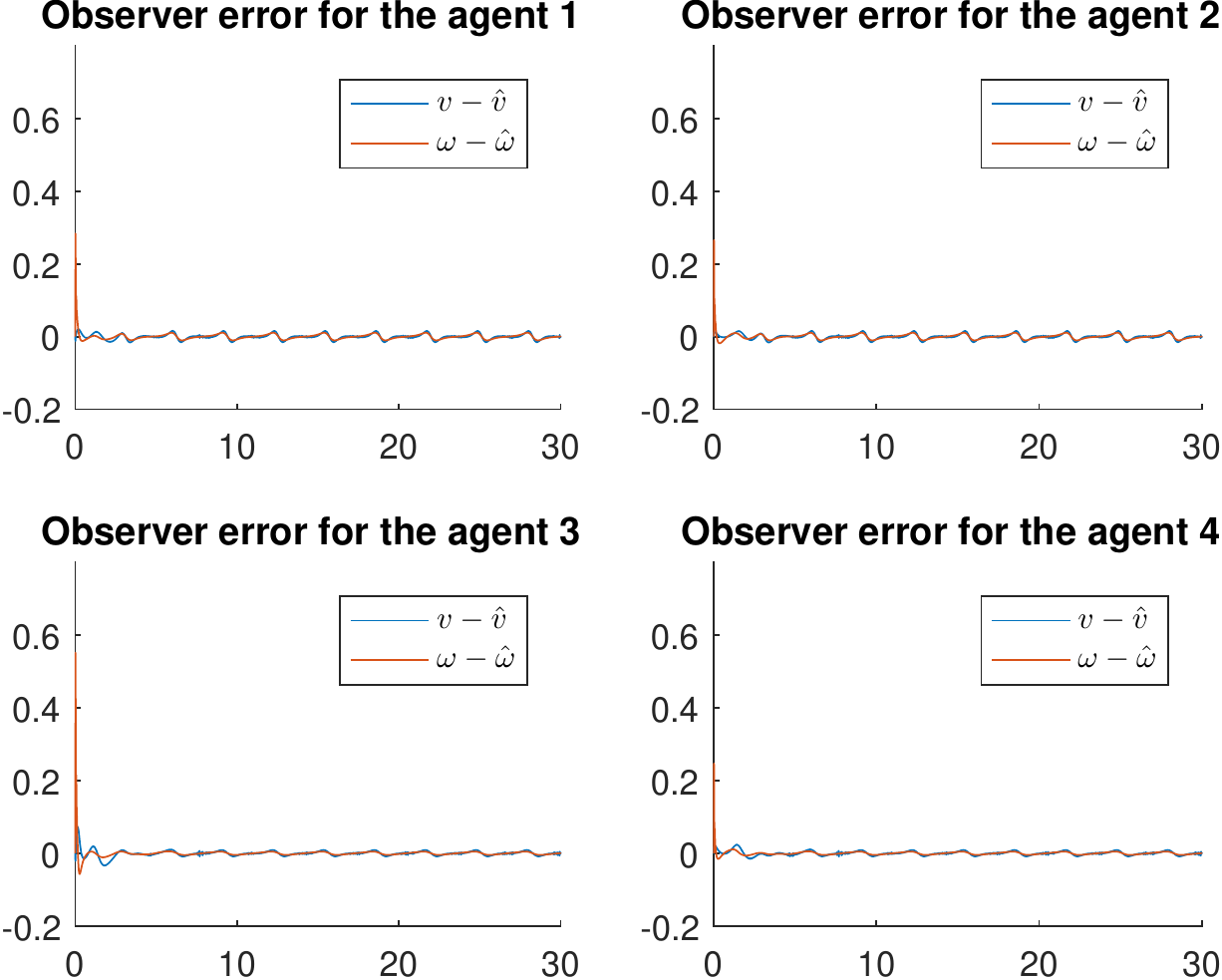}
		\caption{Observer errors using observer (\ref{observer_omega}) and (\ref{observer_v}).}
		\label{EXPC_observer_error}
	\end{subfigure} \quad
	\begin{subfigure}{0.45\textwidth}
		\includegraphics[width=\linewidth]{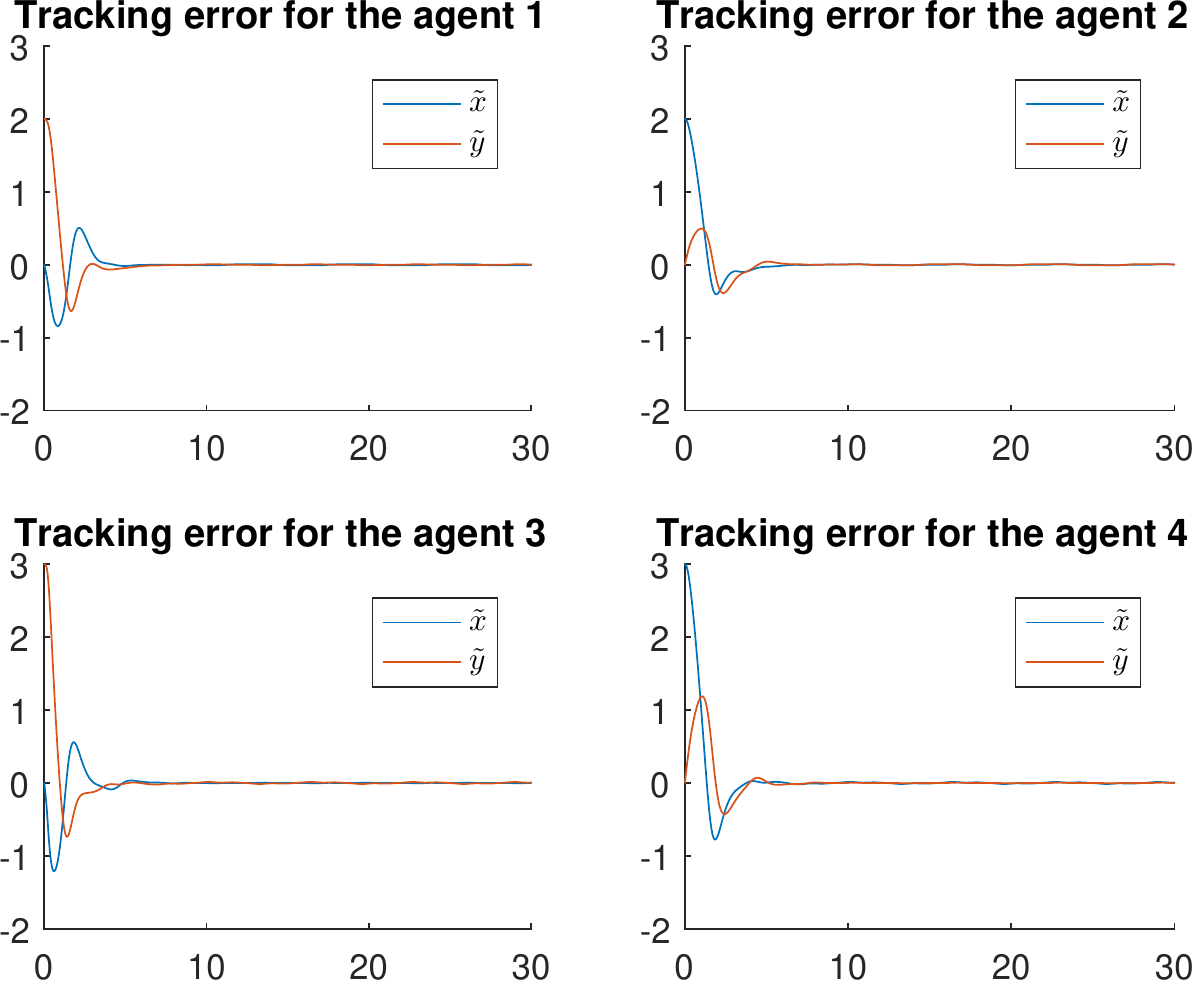}
		\caption{Tracking errors using controller (\ref{input_exp}) with (\ref{uc}).}
		\label{EXPC_tracking_error}
	\end{subfigure} \\
	\caption{Observer errors and tracking errors using observer-based controller.}
\end{figure*}

\begin{figure*} [htb!]
	\centering
	\begin{subfigure}{0.3\textwidth}
		\includegraphics[width=\linewidth]{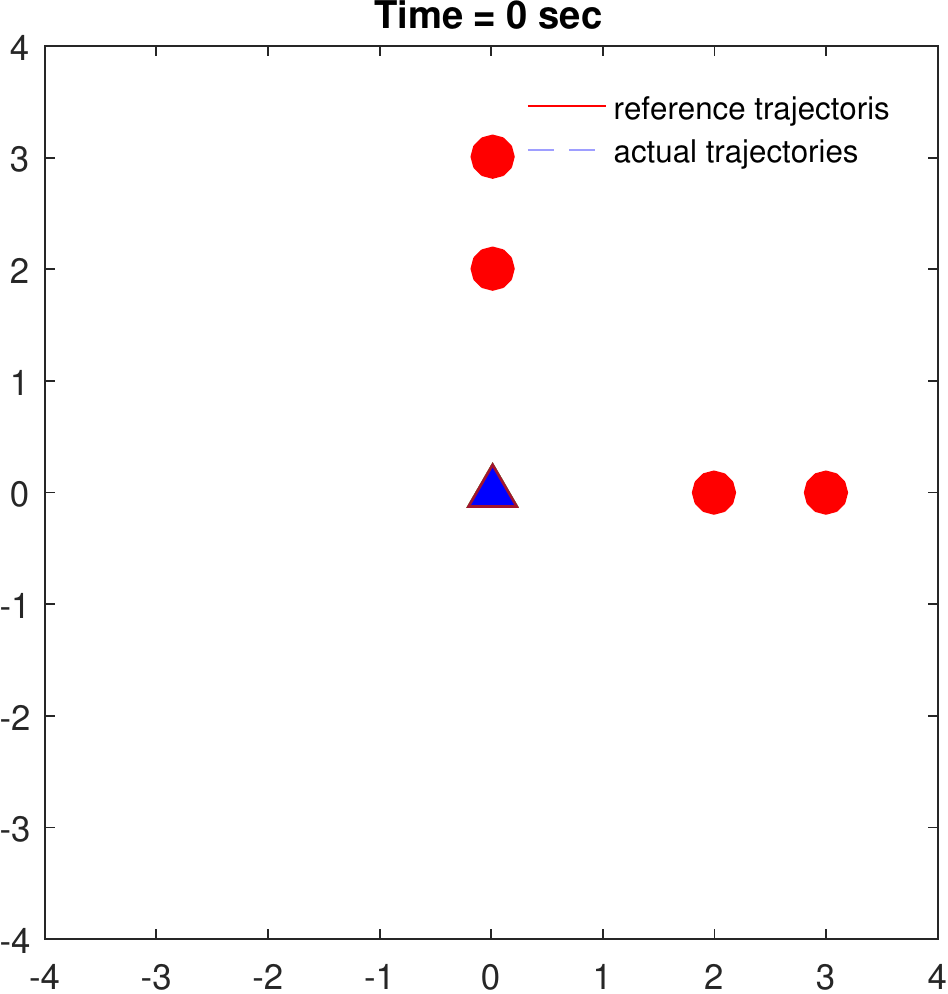}
		\caption{time at 0 seconds}
		\label{EXPC_0s}
	\end{subfigure} \quad
	\begin{subfigure}{0.3\textwidth}
		\includegraphics[width=\linewidth]{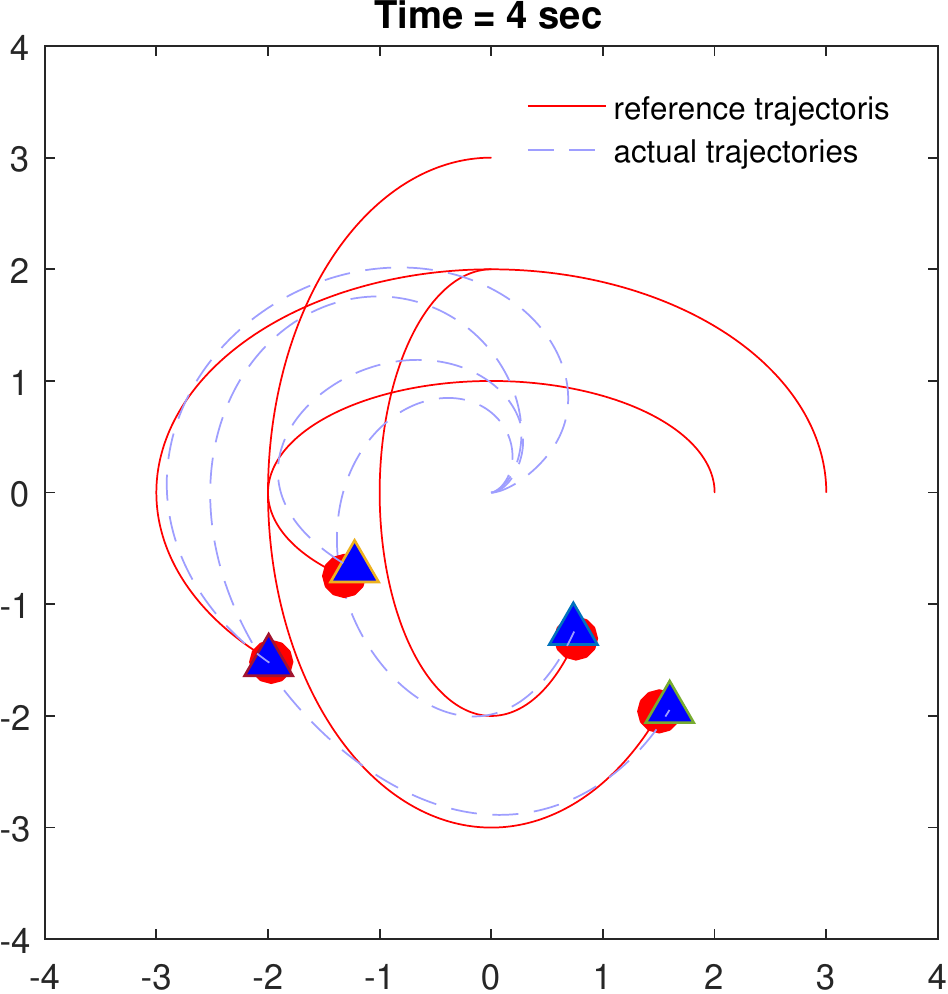}
		\caption{time at 4 seconds}
		\label{EXPC_4s}
	\end{subfigure} \quad
	\begin{subfigure}{0.3\textwidth}
		\includegraphics[width=\linewidth]{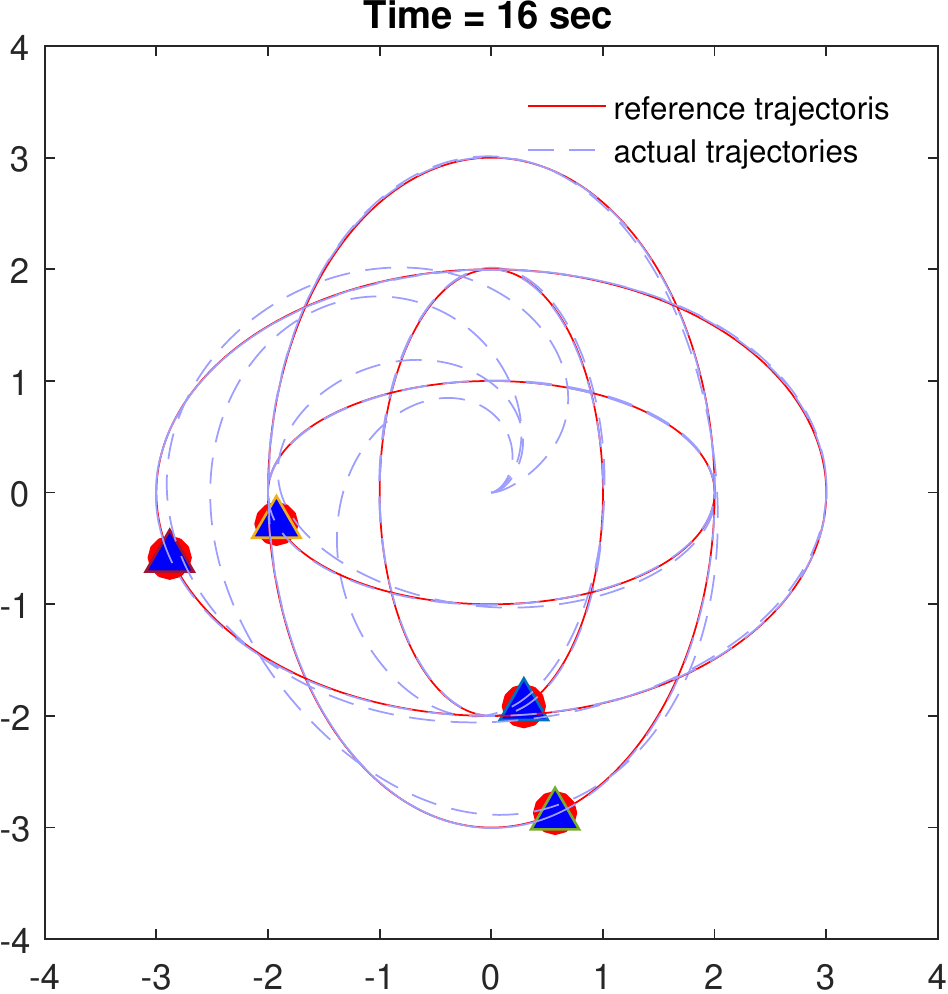}
		\caption{time at 16 seconds}
		\label{EXPC_16s}
	\end{subfigure} \\
	\caption{Snapshot of trajectory tracking using controller (\ref{input_exp}) with (\ref{uc}).}
	\label{2D_EXP}
\end{figure*}

To demonstrate the results of Theorem~\ref{Thm_experience}, which states that after the learning process, each vehicle is able to use the learned knowledge to follow any reference trajectory experienced by any vehicle on the learning stage.
In this part of our simulation, the experience-based controller (\ref{input_exp}) will be implemented with the same parameters as those of the previous subsection, such that vehicle 1 will follow the reference trajectory of vehicle 3, vehicle 2 will follow the reference trajectory of vehicle 1, and vehicle 3 will follow the reference trajectory of vehicle 2.
The initial position of the vehicles are set at the origin, with all velocities equal to zero.


%

Simulation results are shown as following.
Figure~\ref{EXPC_observer_error} shows that the observer error will converge to a close neighborhood around zero in a very short time period.
Figures~\ref{EXPC_0s} to \ref{EXPC_16s} show that all vehicles (blue triangles) will track its own reference trajectory (red solid circles), and Figure~\ref{EXPC_tracking_error} shows that all tracking errors $\tilde{x}_i$ and $\tilde{y}_i$ will converge to zero.

\section{Conclusion}

In this chapter, a high-gain observer-based CDL control algorithm has been proposed to estimate the unmodeled nonlinear dynamics of a group of homogeneous unicycle-type vehicles while tracking their reference trajectories.
It has been shown in this chapter that the state estimation, trajectory tracking, and consensus learning are all achieved using the proposed algorithm.
To be more specific, any vehicle in the system is able to learn the unmodeled dynamics along the union of trajectories experienced by all vehicles with the state variables provided by measurements and observer estimations.
In addition, we have also shown that with the converged NN weight, this knowledge can be applied on the vehicle to track any experienced trajectory with reduced computational complexity.
Simulation results have been provided to demonstrate the effectiveness of this proposed algorithm.


\begin{thebibliography}{10}

\bibitem{yu2015trajectory}
X.~Yu, L.~Liu, and G.~Feng, ``Trajectory tracking for nonholonomic vehicles
  with velocity constraints,'' \emph{IFAC-PapersOnLine}, vol.~48, no.~11, pp.
  918--923, 2015.

\bibitem{chen2015simple}
X.~Chen and Y.~Jia, ``Simple tracking controller for unicycle-type mobile
  robots with velocity and torque constraints,'' \emph{Transactions of the
  Institute of Measurement and Control}, vol.~37, no.~2, pp. 211--218, 2015.

\bibitem{seyboth2014collective}
G.~S. Seyboth, J.~Wu, J.~Qin, C.~Yu, and F.~Allg{\"o}wer, ``Collective circular
  motion of unicycle type vehicles with nonidentical constant velocities,''
  \emph{IEEE Transactions on control of Network Systems}, vol.~1, no.~2, pp.
  167--176, 2014.

\bibitem{dong2018cooperative}
X.~Dong, C.~Yuan, and F.~Wu, ``Cooperative deterministic learning-based
  trajectory tracking for a group of unicycle-type vehicles,'' in \emph{ASME
  2018 Dynamic Systems and Control Conference}.\hskip 1em plus 0.5em minus
  0.4em\relax American Society of Mechanical Engineers, 2018, pp.
  V003T30A006--V003T30A006.

\bibitem{luenberger1964observing}
D.~G. Luenberger, ``Observing the state of a linear system,'' \emph{IEEE
  transactions on military electronics}, vol.~8, no.~2, pp. 74--80, 1964.

\bibitem{luenberger1971introduction}
D.~Luenberger, ``An introduction to observers,'' \emph{IEEE Transactions on
  automatic control}, vol.~16, no.~6, pp. 596--602, 1971.

\bibitem{lee1997adaptive}
K.~W. Lee and H.~K. Khalil, ``Adaptive output feedback control of robot
  manipulators using high-gain observer,'' \emph{International Journal of
  Control}, vol.~67, no.~6, pp. 869--886, 1997.

\bibitem{khalil2008high}
H.~K. Khalil, ``High-gain observers in nonlinear feedback control,'' in
  \emph{2008 International Conference on Control, Automation and
  Systems}.\hskip 1em plus 0.5em minus 0.4em\relax IEEE, 2008, pp. xlvii--lvii.

\bibitem{Main}
W.~Zeng, Q.~Wang, F.~Liu, and Y.~Wang, ``Learning from adaptive neural network
  output feedback control of a unicycle-type mobile robot,'' \emph{ISA
  transactions}, vol.~61, pp. 337--347, 2016.

\bibitem{almuatazbellah2018high}
A.~Boker and C.~Yuan, ``High-gain observer-based distributed tracking control
  of heterogeneous nonlinear multi-agent systems,'' in \emph{2018 37th Chinese
  Control Conference (CCC)}.\hskip 1em plus 0.5em minus 0.4em\relax IEEE, 2018,
  pp. 6639--6644.

\bibitem{rossomando2015identification}
F.~G. Rossomando and C.~M. Soria, ``Identification and control of nonlinear
  dynamics of a mobile robot in discrete time using an adaptive technique based
  on neural pid,'' \emph{Neural Computing and Applications}, vol.~26, no.~5,
  pp. 1179--1191, 2015.

\bibitem{miao2015adaptive}
Z.~Miao and Y.~Wang, ``Adaptive control for simultaneous stabilization and
  tracking of unicycle mobile robots,'' \emph{Asian Journal of Control},
  vol.~17, no.~6, pp. 2277--2288, 2015.

\bibitem{fierro1998control}
R.~Fierro and F.~L. Lewis, ``Control of a nonholonomic mobile robot using
  neural networks,'' \emph{IEEE Transactions on neural networks}, vol.~9,
  no.~4, pp. 589--600, 1998.

\bibitem{DL}
C.~Wang and D.~J. Hill, \emph{Deterministic learning theory for identification,
  recognition, and control}.\hskip 1em plus 0.5em minus 0.4em\relax CRC Press,
  2009, vol.~32.

\bibitem{cai2015adaptive}
X.~Cai and M.~de~Queiroz, ``Adaptive rigidity-based formation control for
  multirobotic vehicles with dynamics,'' \emph{IEEE Transactions on Control
  Systems Technology}, vol.~23, no.~1, pp. 389--396, 2015.

\bibitem{yuan2017distributed}
C.~Yuan, ``Distributed adaptive switching consensus control of heterogeneous
  multi-agent systems with switched leader dynamics,'' \emph{Nonlinear
  Analysis: Hybrid Systems}, vol.~26, pp. 274--283, 2017.

\bibitem{yuan2017formation}
C.~Yuan, S.~Licht, and H.~He, ``Formation learning control of multiple
  autonomous underwater vehicles with heterogeneous nonlinear uncertain
  dynamics,'' \emph{IEEE transactions on cybernetics}, no.~99, pp. 1--15, 2017.

\bibitem{YuaZD.ISA19}
C.~Yuan, W.~Zeng, and S.~Dai, ``Distributed model reference adaptive containment control of heterogeneous uncertain multi-agent systems,''
  \emph{ISA Transactions}, vol.~86, pp. 73--86, 2019.

\bibitem{yuan2019cooperative}
C.~Yuan, H.~He, and C.~Wang, ``Cooperative deterministic learning-based
  formation control for a group of nonlinear uncertain mechanical systems,''
  \emph{IEEE Transactions on Industrial Informatics}, vol.~15, no.~1, pp.
  319--333, 2019.

\bibitem{yuan2019distributed}
P.~Stegagno and C.~Yuan, ``Distributed cooperative adaptive state estimation
  and system identification for multi-agent systems,'' \emph{IET Control Theory
  \& Applications}, vol.~13, no.~1, pp. 815--822, 2019.

\bibitem{agaev2006matrix}
R.~Agaev and P.~Chebotarev, ``The matrix of maximum out forests of a digraph
  and its applications,'' \emph{arXiv preprint math/0602059}, 2006.

\bibitem{park1991universal}
J.~Park and I.~W. Sandberg, ``Universal approximation using
  radial-basis-function networks,'' \emph{Neural computation}, vol.~3, no.~2,
  pp. 246--257, 1991.

\bibitem{buhmann2003radial}
M.~D. Buhmann, \emph{Radial basis functions: theory and implementations}.\hskip
  1em plus 0.5em minus 0.4em\relax Cambridge university press, 2003, vol.~12.

\bibitem{wang2006learning}
C.~Wang and D.~J. Hill, ``Learning from neural control,'' \emph{IEEE
  Transactions on Neural Networks}, vol.~17, no.~1, pp. 130--146, 2006.

\bibitem{fierro1995control}
R.~Fierro and F.~L. Lewis, ``Control of a nonholonomic mobile robot:
  backstepping kinematics into dynamics,'' in \emph{Decision and Control,
  1995., Proceedings of the 34th IEEE Conference on}, vol.~4.\hskip 1em plus
  0.5em minus 0.4em\relax IEEE, 1995, pp. 3805--3810.

\bibitem{book_robotics}
B.~Siciliano, L.~Sciavicco, L.~Villani, and G.~Oriolo, \emph{Robotics:
  modelling, planning and control}.\hskip 1em plus 0.5em minus 0.4em\relax
  Springer Science \& Business Media, 2010.

\bibitem{oh1997nonlinear}
S.~Oh and H.~K. Khalil, ``Nonlinear output-feedback tracking using high-gain
  observer and variable structure control,'' \emph{Automatica}, vol.~33,
  no.~10, pp. 1845--1856, 1997.

\bibitem{TargetVelocity}
Y.~Kanayama, Y.~Kimura, F.~Miyazaki, and T.~Noguchi, ``A stable tracking
  control method for an autonomous mobile robot,'' in \emph{Robotics and
  Automation, 1990. Proceedings., 1990 IEEE International Conference on}.\hskip
  1em plus 0.5em minus 0.4em\relax IEEE, 1990, pp. 384--389.

\bibitem{book_robust}
P.~A. Ioannou and J.~Sun, \emph{Robust adaptive control}.\hskip 1em plus 0.5em
  minus 0.4em\relax Courier Corporation, 2012.

\bibitem{barbalat1959systemes}
I.~Barbalat, ``Systemes d’{\'e}quations diff{\'e}rentielles d’oscillations
  non lin{\'e}aires,'' \emph{Rev. Math. Pures Appl}, vol.~4, no.~2, pp.
  267--270, 1959.

\bibitem{Cooperative_PE}
W.~Chen, C.~Wen, S.~Hua, and C.~Sun, ``Distributed cooperative adaptive
  identification and control for a group of continuous-time systems with a
  cooperative pe condition via consensus,'' \emph{IEEE Transactions on
  Automatic Control}, vol.~59, no.~1, pp. 91--106, 2014.

\end{thebibliography}

\end{document}